\newcommand{\vb}{\boldsymbol}
\DeclareMathOperator*{\E}{\mathbb{E}}
\DeclareMathOperator{\given}{\:\vert\:}
\DeclareMathOperator*{\I}{\mathbb{I}}
\DeclareMathOperator*{\argmax}{argmax}
\DeclareMathOperator{\OPT}{OPT}
\DeclareMathOperator{\ALG}{ALG}
\DeclareMathOperator{\Ber}{Ber}
\DeclareMathOperator{\Binom}{Binom}
\newcommand{\calF}{\mathcal{F}}
\newcommand{\calI}{\mathcal{I}}
\newcommand{\calM}{\mathcal{M}}
\newcommand{\calU}{\mathcal{U}}
\let\abs\relax
\DeclarePairedDelimiter{\abs}{\lvert}{\rvert}
\newcommand{\IGNORE}[1]{}
\newcounter{note}[section]
\newcommand{\PreserveBackslash}[1]{\let\temp=\\#1\let\\=\temp}
\newcolumntype{C}[1]{>{\PreserveBackslash\centering}p{#1}}
\newcolumntype{R}[1]{>{\PreserveBackslash\raggedleft}p{#1}}
\newcolumntype{L}[1]{>{\PreserveBackslash\raggedright}p{#1}}
\newcommand{\ols}[1]{\mskip.5\thinmuskip\overline{\mskip-.5\thinmuskip {#1} \mskip-.5\thinmuskip}\mskip.5\thinmuskip} 
\newcommand{\olsi}[1]{\,\overline{\!{#1}}} 
\newcommand\closure[1]{
  \tctestifnum{\count@stringtoks{#1}>1} 
  {\ols{#1}} 
  {\olsi{#1}} 
}
\long\def\count@stringtoks#1{\tc@earg\count@toks{\string#1}}
\long\def\count@toks#1{\the\numexpr-1\count@@toks#1.\tc@endcnt}
\long\def\count@@toks#1#2\tc@endcnt{+1\tc@ifempty{#2}{\relax}{\count@@toks#2\tc@endcnt}}
\def\tc@ifempty#1{\tc@testxifx{\expandafter\relax\detokenize{#1}\relax}}
\long\def\tc@earg#1#2{\expandafter#1\expandafter{#2}}
\long\def\tctestifnum#1{\tctestifcon{\ifnum#1\relax}}
\long\def\tctestifcon#1{#1\expandafter\tc@exfirst\else\expandafter\tc@exsecond\fi}
\long\def\tc@testxifx{\tc@earg\tctestifx}
\long\def\tctestifx#1{\tctestifcon{\ifx#1}}
\long\def\tc@exfirst#1#2{#1}
\long\def\tc@exsecond#1#2{#2}
\newtheorem{theorem}{Theorem}[section]
\Crefname{claim}{Claim}{Claims}
\newtheorem{proposition}[theorem]{Proposition}
\newtheorem{lemma}[theorem]{Lemma}
\newtheorem{corollary}[theorem]{Corollary}
\newtheorem{fact}[theorem]{Fact}
\theoremstyle{definition}
\newtheorem{definition}[theorem]{Definition}
\newtheorem{remark}[theorem]{Remark}
\title{The $k$-Fold Matroid Secretary Problem}
\author{Rishi Gujjar \and Kevin Hua \and Robert Kleinberg \and Frederick V. Qiu}
\date{\today}
\begin{document}

\maketitle

\begin{abstract}
    In the \emph{matroid secretary problem}, elements $N \coloneqq [n]$ of a matroid $\calM \subseteq 2^N$ arrive in random order. When an element arrives, its weight is revealed and a choice must be made to accept or reject the element, subject to the constraint that the accepted set $S \in \calM$.~\cite{Kleinberg05} gives a $(1-O(1/\sqrt{k}))$-competitive algorithm when $\calM$ is a $k$-uniform matroid. We generalize their result, giving a $(1-O(\sqrt{\log(n)/k}))$-competitive algorithm when $\calM$ is a \emph{$k$-fold matroid union}.
\end{abstract}

\section{Introduction}

In the secretary problem, elements $N \coloneqq [n]$ arrive in random order. When an element $i \in N$ arrives, its weight is revealed, and a decision must be made to accept or reject the element, subject to a known feasibility constraint $\calF$. The objective is to maximize the expected value of the accepted set. If an algorithm accepts a set whose expected weight is at least $\alpha$ times the weight of the max-weight set in $\calF$, we say the algorithm is \emph{$\alpha$-competitive}.

Secretary problems are fundamental problems in optimal stopping with close connections to economics and computing, relating to pricing and auction theory~\cite{HajiaghayiKP04,Kleinberg05,BabaioffIK07,BabaioffIKK08}. The classical result of~\cite{Dynkin63} gives an optimal \emph{$1/e$-competitive} algorithm when $\calF$ is a $1$-uniform matroid, i.e., at most $1$ element can be selected. The later work of~\cite{Kleinberg05} gives a $(1-5/\sqrt{k})$-competitive algorithm when $\calF$ is a $k$-uniform matroid, i.e., at most $k$ elements can be selected.\footnote{This result is optimal in the sense that there exists a constant $C$ for which no algorithm is $(1-C/\sqrt{k})$-competitive.} One might ask what conditions on $\calF$ allow for $(1-o(1))$-competitive algorithms. Motivated by the recent work of~\cite{AlonGPRWWZ25} for prophet inequalities, we consider the following class of constraints.

\begin{definition}[Matroid Union]
    For matroids $\calM_1, \dots, \calM_k \subseteq 2^N$, their \emph{union} $\calM_1 \lor \dots \lor \calM_k$ is the matroid where $S \in \calM_1 \lor \dots \lor \calM_k$ if and only if $S$ can be partitioned into $S_1, \dots, S_k$ such that $S_\ell \in \calM_\ell$ for all $\ell \in [k]$. For a matroid $\calM$, its \emph{$k$-fold union} is defined as $\calM^k = \bigvee_{i \in [k]} \calM$.
\end{definition}

Observe that a $k$-uniform matroid is the $k$-fold union of a $1$-uniform matroid, and so the class of $k$-fold matroid unions nicely generalizes a ``redundancy'' property that allows for $(1-o(1))$-competitive algorithms in the $k$-uniform matroid case. Our main result is as follows.

\begin{theorem} \label{thm:kFoldMatroidSecretary}
    For any matroid $\calM$, there is a $(1-O(\sqrt{\log(n)/k}))$-competitive algorithm for $\calM^k$.
\end{theorem}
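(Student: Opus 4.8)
The plan is to adapt the recursive construction of~\cite{Kleinberg05} for the $k$-uniform matroid, leaning on the single structural fact that $\calM^k = \bigvee_{i} \calM^{\kappa_i}$ whenever $\kappa_1 + \dots + \kappa_m = k$: this lets us chop the arrival interval $[0,1)$ into time windows, give window $i$ a private ``budget matroid'' $\calM^{\kappa_i}$, run a subroutine in window $i$ that outputs some $S_i \in \calM^{\kappa_i}$, and return $\bigcup_i S_i$, which is automatically feasible for $\calM^k$. First dispose of the trivial regime: if $k \le C\log n$ then $1 - O(\sqrt{\log(n)/k})$ can be taken non-positive and the empty set is competitive, so assume $k \gg \log n$. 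Take windows $W_i$ of geometrically decreasing length, so $W_i$ receives an essentially uniform $2^{-i}$-fraction of the elements, plus a final ``give-up'' window of length $\Theta(\sqrt{\log(n)/k})$; allocate $\kappa_i = \Theta(2^{-i}k)$ with $\sum_i \kappa_i = k$. In window $W_i$ the subroutine inspects everything that arrived before $W_i$ (a uniform subsample), computes the offline greedy solution of a suitably reduced union on that sample, and reads off a weight threshold $v_i$ (in general a short decreasing list of thresholds, to also refine the weight scale); it sets $v_i$ slightly conservatively and then accepts, by online matroid-greedy into $\calM^{\kappa_i}$, every element of $W_i$ with weight above $v_i$. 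The conservative choice is calibrated so that, with high probability, the accepted elements are \emph{all} of the above-threshold elements of $W_i$; the only way online-greedy can then fail to collect their full weight is if they are not jointly independent in $\calM^{\kappa_i}$, which is exactly what the calibration is meant to preclude.

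For the analysis, let $O^\star$ be the offline greedy optimum for $\calM^k$, so that $w(O^\star) = \OPT = \int_0^\infty \operatorname{rank}_{\calM^k}(N_{\ge t})\,dt$. Since window $W_i$ sees a uniformly random $2^{-i}$-fraction of the elements, the part of $O^\star$ landing in $W_i$ has expected weight $\Theta(2^{-i})\OPT$, and over all windows these parts sum to $\OPT$; the give-up window carries only a $\Theta(\sqrt{\log(n)/k})$-fraction and can be conceded outright. It then suffices to show that window $W_i$ recovers its part of $O^\star$ up to an additive loss that, summed over $i$, is $O(\sqrt{\log(n)/k})\cdot\OPT$. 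Quantitatively this comes down to two concentration statements, invoked at each of the $O(n)$ relevant weight levels and each of the $O(\log n)$ windows --- the source of the logarithm inside the square root: (i) the restriction of a $\calM^k$-independent set to a uniform $p$-fraction of $N$ becomes independent in $\calM^{\lceil pk\rceil}$ after deleting only $O(\sqrt{pk\log n})$ elements, equivalently the conservative threshold $v_i$ need be pushed only $O(\sqrt{\kappa_i\log n})$ ``rank levels'' below where $O^\star$ stops inside $W_i$; and (ii) $\operatorname{rank}_{\calM^k}(R\cap N_{\ge t})$, as a function of the independent inclusion bits defining the random set $R$, concentrates around its mean with sub-Gaussian tails. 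Statement (ii) follows from standard concentration for matroid rank of a random subset --- a Chernoff bound against a fixed basis for the lower tail and a self-bounding inequality for the upper tail --- which avoids a union bound over the exponentially many flats of $\calM$; statement (i) follows by applying the same inequalities along the lattice of tight sets of the union. Summing the per-window losses $O(\sqrt{\kappa_i\log n})\cdot v_i$ as a geometric series gives $O(\sqrt{k\log n})\cdot\max_i v_i$, and a tight-set argument shows $\max_i v_i = O(\OPT/k)$ in the constrained regime (while $v_i$ is immaterial when $\calM^{\kappa_i}$ is not binding on $W_i$), yielding total loss $O(\sqrt{\log(n)/k})\cdot\OPT$.

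The step I expect to be the real obstacle is the threshold calibration that forces online-greedy to collect the optimal value --- the one feature with no analogue in the $k$-uniform case, where all above-threshold elements are interchangeable and online-greedy is trivially optimal. For a general matroid union there can be many heavy elements that the offline greedy skips as redundant, and in random arrival order these can block the genuinely optimal elements; so $v_i$ must be placed high enough (via a ``skip-budget'' stopping rule on the sample) that the surviving above-threshold elements of $W_i$ are jointly independent in $\calM^{\kappa_i}$, yet low enough that they still carry almost all of $W_i$'s share of $O^\star$. Establishing that such a placement exists with only a $\sqrt{\log(n)/k}$-fraction of slack is where the $k$-fold redundancy and the concentration lemma (i) are used in tandem, and it is the crux; once it is in hand, the recursion bookkeeping and the geometric summation are routine, as in~\cite{Kleinberg05}. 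A simplification worth attempting first is to use the critical-set decomposition of a matroid union to reduce to the ``tight'' case $\operatorname{rank}(\calM^k) = k\operatorname{rank}(\calM)$ --- the part of $N$ outside the critical set is unconstrained in $\calM^k$ and handled trivially --- which makes the per-window share bookkeeping exact and the bound $\max_i v_i = O(\OPT/k)$ immediate.
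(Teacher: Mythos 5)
Your high-level skeleton---geometrically shrinking windows $W_\ell$, each with budget $\kappa_\ell = \Theta(2^{-\ell}k)$, a threshold learned from the prior sample, online-greedy within the window, and the observation that $\calM^k = \bigvee_\ell \calM^{\kappa_\ell}$ makes the union feasible---matches the paper's phase structure closely, and your ``threshold calibrated so that the above-threshold elements of $W_\ell$ are w.h.p.\ jointly independent in $\calM^{\kappa_\ell}$'' is exactly the role played by the paper's ``improvement w.r.t.\ the sample in $\calM^{r_\ell}$'' rule together with the covering-number cap $\varphi(\ALG_\ell) < (1+\varepsilon(r_\ell)) r_\ell$.

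The real gap is the concentration mechanism, which you explicitly try to route around. What the calibration step needs is not rank concentration but control of the \emph{covering number} $\varphi(\cdot, \calM)$: you need $T^* \in \calM^{(1+\varepsilon)r}$, i.e.\ $\varphi(T^*) \le (1+\varepsilon)r$, w.h.p. Your statement (ii), that $\operatorname{rank}_{\calM^k}(R \cap N_{\ge t})$ concentrates by self-bounding, is true but too coarse to deliver this: concentration of $\rho(S,\calM^r)$ around its mean says nothing about whether $\rho(S,\calM^r)=\abs{S}$ exactly, which is what independence requires. Your statement (i) is precisely the paper's \Cref{lemma:ThinningAlsoThinsOccupancy}, and the paper proves it via the flat union bound you say you want to avoid: rewriting Nash--Williams/Edmonds as $\varphi(S,\calM) = \max_{F \text{ flat}} \lceil \abs{S\cap F}/\rho(F)\rceil$, observing there are at most $n^j$ flats of rank $j$, and beating each with a Chernoff failure probability of $n^{-5j}$, so the union bound closes despite exponentially many flats in total. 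Saying ``apply the same inequalities along the lattice of tight sets of the union'' does not obviously substitute; without a stratification of that lattice by rank (or some other device) to offset its size, the union bound does not close, and this is exactly where the $\log n$ in the bound actually comes from---not, as you posit, from a union bound over weight levels and windows. Relatedly, a further ingredient with no analogue in your outline is the symmetry argument (\Cref{lemma:ExpectedUtility}): because the paper subsamples at rate $2p$ and then splits the subsample into two symmetric halves $S$ and $T$, it gets $\E[w(T^*)] = \E[w(S^*)]$ for free, transferring weight from the sample side to the decision side without any per-weight-level bookkeeping. Your plan to invoke concentration ``at each of the $O(n)$ relevant weight levels,'' together with the claimed bound $\max_\ell v_\ell = O(\OPT/k)$, is not needed in the paper's proof and I am skeptical it can be made both correct and lossless as stated.
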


The weaker statement that the competitive ratio is $1-O(\sqrt{\log(n)}/k^{1/3})$ has a straightforward proof when $\calM^k$ is a $k$-uniform matroid: let $\varepsilon = \Theta(\sqrt{\log(n)}/k^{1/3})$, and accept every element with higher weight than the $\approx \varepsilon k$\textsuperscript{th} highest weight element among the first $\varepsilon n$ elements that arrive. By standard concentration inequalities, the number of elements exceeding this threshold is in $[(1-O(\varepsilon)) k, k]$ w.h.p. Therefore, accepting every element that exceeds the threshold yields a $(1-O(\sqrt{\log(n)}/k^{1/3}))$-competitive algorithm. To generalize to $k$-fold matroid unions, we no longer consider the number of accepted elements, but the \emph{covering number} of the accepted set.

\begin{definition}[Covering Number]
    The \emph{covering number $\varphi(S, \calM)$ of $S$ in $\calM$} is defined to be the smallest $r \in \mathbb{N}$ such that $S$ can be partitioned into $S_1, \dots, S_r \in \calM$.
\end{definition}

Observe that if $\varphi(S, \calM) \leq k$, then $S \in \calM^k$, so $\varphi(S, \calM)$ behaves as an analogue of $\abs{S}$ when generalizing from $k$-uniform matroids to $k$-fold matroid unions. The difficulty lies in the fact that while the cardinality function is additive and thus satisfies many black-box properties regarding its expectation and concentration, it is not a priori clear how to control the expectation or concentration of the covering number $\varphi$.

Instead, our analysis relies on a characterization of the covering number via a generalization of the Nash-Williams Theorem~\cite{Tutte61,Nash-Williams64,Edmonds65}. A slight manipulation of the theorem statement allows us to bound the covering number by a maximum over additive functions, whose upper tail can be controlled with a union bound. We then use the upper tail bound to argue that our algorithm is able to accept every element that exceeds its threshold w.h.p. Some additional care is needed to argue that the expected weight of the accepted elements is large.

One can ask whether $(1-o(1))$-competitive algorithms exist under broader feasibility constraints than $k$-fold matroid unions. One natural weakening is the class of matroids resulting from unions of $k$ possibly different loopless matroids. Note that every loopless matroid contains the $1$-uniform matroid, so every union of $k$ loopless matroids contains the $k$-uniform matroid. We give a simple example showing that such a condition does not suffice to obtain a $1 - o(1)$ competitive ratio.

\begin{theorem}[name=,restate=kMatroids] \label{thm:kMatroids}
    For any $\varepsilon > 0$ and $k \in \mathbb{N}$, there exists $n \in \mathbb{N}$ and a loopless matroid $\calM \subseteq 2^N$ such that no algorithm is $(1/e + \varepsilon)$-competitive for $\calM \lor \calU^k$, where $\calU \subseteq 2^N$ is the $1$-uniform matroid.
\end{theorem}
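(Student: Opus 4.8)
\emph{Proof plan.}
The target is the classical single-secretary lower bound: for a $1$-uniform matroid no algorithm is $(1/e+o(1))$-competitive (Dynkin). The plan is to build $\calM$ and a weighted instance so that, against any algorithm, $\calM\lor\calU^k$ decomposes into many disjoint copies of the single-secretary problem, with the ``help'' of $\calU^k$ able to affect only $k$ of the copies. Concretely, I would take $\calM$ to be the partition matroid on a ground set $N=C_1\sqcup\dots\sqcup C_T$ whose parts $C_1,\dots,C_T$ all have size $m$ and all have capacity $1$, where $T=T(\varepsilon,k)$ and $m=m(\varepsilon)$ are chosen large at the end; then $\calM$ is loopless and $n=Tm$. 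A direct application of the matroid-union rank formula gives $\operatorname{rk}_{\calM\lor\calU^k}(S)=\min\bigl(|S|,\ p(S)+k\bigr)$, where $p(S)=\#\{j:S\cap C_j\neq\emptyset\}$, so that $S\in\calM\lor\calU^k$ iff $\sum_{j\in[T]}\bigl(|S\cap C_j|-1\bigr)^{+}\le k$. In words: an algorithm may take one ``free'' element from each part plus at most $k$ ``extra'' elements in total, and hence at most $k$ of the parts ever receive a second element.

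Next I would choose weights so that (i) the parts are symmetric — each part has the same optimal value $V$, so $\OPT=TV$ — and (ii) the within-part weight profile is single-secretary-hard, i.e.\ an algorithm that may accept at most one element of that part, and that cannot identify which of its elements is the maximum, gets at most $(1/e+o_m(1))V$ in expectation from it. Given such a profile, the lower bound follows by a clean accounting: write $\ALG=\sum_{j\in[T]}\E[\text{value }\ALG\text{ extracts from }C_j]$; for the (at least $T-k$) parts from which $\ALG$ accepts $\le 1$ element, the induced sub-process is exactly a single-secretary instance — the parts are independent in $\calM$ and the arrival order restricted to $C_j$ is uniform — so that term is at most $(1/e+o_m(1))V$ by Dynkin; for the at most $k$ remaining parts the term is trivially at most $V$. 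Hence $\ALG\le (1/e+o_m(1))\,TV+kV$, and the competitive ratio is at most $1/e+o_m(1)+k/T$, which is below $1/e+\varepsilon$ once $m$ and then $T$ are large enough.

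The hard part is establishing (ii) in a way that is compatible with $\calM\lor\calU^k$. Because $\calM\lor\calU^k$ necessarily has girth at least $k+2$ (every set of size $\le k+1$ is independent, by the rank formula above), one cannot simply forbid an algorithm from taking two elements of a part, nor make the ``valuable'' elements locally unique; the hardness must come entirely from the weights. The obstacle is that the naive super-increasing ladder inside each part creates a clean top tier of $T$ elements that an algorithm can learn to recognize and then grab outright (its budget $T+k$ is more than enough), giving competitive ratio $1-o(1)$. I would circumvent this either by mildly randomizing the instance — by Yao's principle it suffices to exhibit a distribution over weight profiles — choosing the per-part scales from a distribution under which the maximum of a part is statistically indistinguishable from a sub-maximal element of another part, or by using a recursive/nested weighting in which the ``top tier'' of one part is interleaved with the ordinary elements of others; in either case one must certify that no algorithm — in particular none that distributes its $k$ extras adaptively or exploits global record information across the whole stream — can beat the per-part single-secretary barrier on more than $k$ parts. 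Verifying this last point, together with making the randomized instance concrete and controlling the $o_m(1)$ error from the finite part size, is where the main work lies; the rest is the rank computation and the additive accounting above.
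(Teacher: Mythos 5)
Your construction is the same as the paper's: $\calM$ is a partition matroid with $T$ parts of size $m$, each of capacity $1$, so that $\calM \lor \calU^k$ lets an algorithm take one element per part plus at most $k$ extras, and the accounting $\E[w(\ALG)] \le (1/e+o_m(1))TV + kV$ with $T \gg k/\varepsilon$ is exactly the paper's. The one genuine gap is the ingredient you explicitly defer: you never produce the hard per-part instance. As you correctly observe, a fixed weight profile cannot work (the algorithm sees numerical weights and can learn to recognize the top tier), so what is needed is a distribution $\mu$ over weight vectors on $m$ elements such that no algorithm --- even one that knows $\mu$ and observes exact weights --- is $(1/e+\varepsilon)$-competitive in expectation over $\mu$, together with a minimax step converting distributional hardness into worst-case hardness. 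The paper does not construct this either; it imports both pieces from \cite{BanihashemHKKMO25} (\Cref{thm:HardDistribution} and \Cref{thm:DistributionallyCompetitive}) and takes the global weight distribution $\nu$ to be $2k/\varepsilon$ independent copies of $\mu$, one per part. So your plan is completable, but only once you cite or prove such a result; as written, the central hardness claim is an announced to-do rather than an argument.

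Two smaller points on the accounting. First, ``for the parts from which $\ALG$ accepts $\le 1$ element, the induced sub-process is exactly a single-secretary instance'' is not quite right as stated: which parts receive a second element is chosen adaptively, so conditioning on that event can bias the sub-process. The clean version (which the paper uses) is to let $\ALG_\calM$ be the set of \emph{first} elements accepted from each part; then $\ALG \setminus \ALG_\calM$ has at most $k$ elements, each bounded by its part's maximum, and the first-element rule restricted to a uniformly random part is a legitimate single-secretary algorithm (one that may use the other parts' weights as side information --- which is precisely why the per-part hardness must be distributional). Second, you cannot invoke Dynkin for the per-part bound: Dynkin's optimality is for the comparison-only model, whereas here the algorithm observes numerical weights and cross-part information. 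This is the same issue as above, and it is again resolved only by the distributional hardness result you left unconstructed.
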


The construction relies on the fact that the rank of $\calM$ can be much larger than $k$, so allowing an algorithm to select an additional $k$ elements has little effect.

\subsection{Related Work}

We have already discussed the most relevant works to ours:~\cite{Kleinberg05} considers the special case of $k$-uniform matroids, and~\cite{AlonGPRWWZ25} considers $k$-fold matroid unions in the related prophet inequality setting. Prior to their work,~\cite{ChekuriSZ24} studied $k$-fold matroid unions in offline optimization settings, and asked whether similar results can be obtained in online settings. For hardness results,~\cite{BanihashemHKKMO25} constructs high girth \emph{graphic} matroids for which no algorithm is $(1/e+\Omega(1))$-competitive.

Besides these works, there is a long line of work in the secretary problem when the feasibility constraint is a matroid. The problem was introduced in the seminal work of~\cite{BabaioffIK07,BabaioffIKK18}, which posed the \emph{(strong) matroid secretary conjecture}, that a constant-competitive ($1/e$-competitive) algorithm exists for the matroid secretary problem. Both conjectures remain unresolved: the best known algorithms for general matroids are $\Omega(1/\log\log(n))$-competitive~\cite{Lachish14,FeldmanSZ18} and there are no hardness results beyond those for $1$-uniform matroids. There are positive results for special classes of matroids~\cite{KesselheimRTV13,DinitzK14,BercziLSV25,BanihashemHKKMO25}, as well as hard instance constructions which rule out certain classes of algorithms~\cite{BahraniBSW21,AbdolazimiKKG23}.

\section{Preliminaries}

\paragraph{Secretary Problem.}
In the secretary problem, we have a set of elements $N \coloneqq [n]$ with distinct non-negative weights and a feasibility constraint $\calF \subseteq 2^N$. The feasibility constraint is known in advance, and the weights of the elements are unknown. In our setting, $\calF = \calM^k$ is assumed to be a $k$-fold matroid union. The elements arrive one at a time in uniformly random order. When element $i \in N$ arrives, its weight is revealed, and the algorithm must decide whether to accept or reject the element, subject to only accepting the element when $\ALG \cup \{i\} \in \calF$, where $\ALG$ is the current set of accepted elements.

We denote by $w(S)$ the weight of a set $S \subseteq N$, which is equal to the sum of the weights of the elements of $S$. We say that an algorithm accepting elements $\ALG$ is \emph{$\alpha$-competitive} if $\E[w(\ALG)] \geq (1/\alpha) \cdot \max_{S \in \calF} w(S)$.

\paragraph{Matroids.}
We now define some terminology, notation, and properties of matroids, skipping over anything not strictly necessary for our technical results.

\begin{definition}[Matroid]
    A set family $\calF \subseteq 2^N$ is a \emph{matroid} if the following properties hold:
    \begin{enumerate}[topsep=4pt]
        \item $\emptyset \in \calF$.
        
        \item If $S \subseteq T$ and $T \in \calF$, then $S \in \calF$ (downward-closure).

        \item If $S, T \in \calF$ and $\abs{S} < \abs{T}$, there exists $i \in T \setminus S$ such that $S \cup \{i\} \in \calF$ (augmentation).
    \end{enumerate}
\end{definition}

For the remaining definitions, let $\calM \subseteq 2^N$ be a matroid.\footnote{A matroid is sometimes defined as a tuple $\calM = (N, \calI)$ of the ground set and the sets contained in the matroid, but for our purposes, it is more convenient to let $\calM$ be synonymous with the set-family itself.}

\begin{definition}[Rank]
    The \emph{rank} of a set $S$ w.r.t.\ $\calM$ is defined $\rho(S, \calM) = \max_{T \in \calM} \abs{S \cap T}$.
\end{definition}

\begin{definition}[Basis]
    A set $S \in \calM$ is a \emph{basis} if $\abs{S} = \rho(S, \calM) = \rho(N, \calM)$.
\end{definition}

\begin{definition}[Max-Weight Basis]
    The \emph{max-weight basis} of $S$ w.r.t.\ $\calM$ is defined $\OPT(S, \calM) = \argmax_{T \subseteq S, T \in \calM} w(T)$. Note that $\OPT(S, \calM)$ is unique when the element weights are unique.
\end{definition}

\begin{definition}[Span]
    The \emph{span} of a set $S$ w.r.t.\ $\calM$ is defined $\sigma(S, \calM) = \{i \in N : \rho(S \cup \{i\}, \calM) = \rho(S, \calM)\}$.
\end{definition}

\begin{definition}[Flat]
    A set $S$ is a \emph{flat} of $\calM$ if $S = \sigma(S, \calM)$.
\end{definition}

\begin{definition}[Improves]
    An element $i$ \emph{improves} a set $S$ w.r.t.\ $\calM$ if $i \in \OPT(S \cup \{i\}, \calM)$. Note that every element $i \in \OPT(S, \calM)$ trivially improves $S$ w.r.t.\ $\calM$. 
\end{definition}

\begin{fact} \label{fact:Improves}
    $i$ improves $S$ w.r.t.\ $\calM$ if and only if $i \not\in \sigma(\{j \in S : w(\{j\}) > w(\{i\})\}, \calM)$.
\end{fact}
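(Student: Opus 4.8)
The plan is to reduce the statement to the classical greedy characterization of $\OPT$ (Rado--Edmonds, \cite{Edmonds65}): if we process the elements of a set $X$ in strictly decreasing order of weight, maintaining an independent set and adding each element exactly when doing so keeps the set in $\calM$, the result is $\OPT(X,\calM)$. Apply this with $X = S \cup \{i\}$. Since greedy never deletes elements, $i \in \OPT(S\cup\{i\},\calM)$ — i.e.\ $i$ improves $S$ — if and only if $i$ is added at the step it is processed. At that step the current independent set is exactly the greedy output on the already-processed elements, namely $S_{>i} \coloneqq \{j \in S : w(\{j\}) > w(\{i\})\}$ (note $i \notin S_{>i}$ regardless of whether $i \in S$). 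Making this precise requires only the observation that greedy's decision at each step depends solely on the current set and current element, so by induction the prefix of the greedy run on $S\cup\{i\}$ up to element $i$ coincides with the full greedy run on $S_{>i}$, whose output is $\OPT(S_{>i},\calM)$. Hence $i$ improves $S$ if and only if $\OPT(S_{>i},\calM) \cup \{i\} \in \calM$.

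Next I would translate the condition $\OPT(S_{>i},\calM) \cup \{i\} \in \calM$ into the span language. Write $A = \OPT(S_{>i},\calM)$. Since $A \in \calM$ we have $\rho(A,\calM)=\abs{A}$, so $A \cup \{i\} \in \calM$ iff $\rho(A\cup\{i\},\calM) = \abs{A}+1 > \rho(A,\calM)$, which by definition of span is exactly the condition $i \notin \sigma(A,\calM)$. Finally I would show $\sigma(A,\calM) = \sigma(S_{>i},\calM)$: because $A$ is a maximal independent subset of $S_{>i}$ (the max-weight independent subset is maximal — tie-breaking toward maximal sets, or a WLOG reduction to positive weights, handles a possible zero-weight element), every element of $S_{>i}$ either lies in $A$ or forms a dependent set with $A$, so $S_{>i} \subseteq \sigma(A,\calM)$; then monotonicity and idempotence of the span operator give $\sigma(S_{>i},\calM) \subseteq \sigma(\sigma(A,\calM),\calM) = \sigma(A,\calM) \subseteq \sigma(S_{>i},\calM)$. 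Chaining the three equivalences yields: $i$ improves $S$ w.r.t.\ $\calM$ iff $i \notin \sigma(S_{>i},\calM)$.

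The argument is essentially routine matroid theory, and the only mildly delicate point is the clean link in the first paragraph between the definition of ``improves'' (phrased via $\OPT(S\cup\{i\},\calM)$) and the greedy algorithm's inclusion decision for $i$. If one prefers to avoid invoking the greedy theorem, the forward implication can be proved directly by exchange: if $i \notin \sigma(S_{>i},\calM)$, a basis $A$ of $S_{>i}$ satisfies $A \cup \{i\} \in \calM$ (augment $A$ from an independent subset of $S_{>i}\cup\{i\}$ of size $\abs{A}+1$), and extending $A\cup\{i\}$ greedily through the remaining elements of $S$ produces a basis of $S\cup\{i\}$ of weight at least $w(\OPT(S\cup\{i\},\calM))$, which by uniqueness of the max-weight basis forces $i \in \OPT(S\cup\{i\},\calM)$; the reverse implication is the contrapositive and is immediate, since $i \in \sigma(S_{>i},\calM)$ means $i$ is spanned by strictly heavier elements of $S$ and so cannot enter any max-weight basis ahead of them.
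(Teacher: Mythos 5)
The paper states this as a \emph{Fact} with no proof, implicitly treating it as the standard greedy/span characterization of membership in the max-weight basis; your argument is exactly that standard proof (run greedy on $S \cup \{i\}$ in decreasing weight order, observe that when $i$ is reached the current set is $\OPT(S_{>i},\calM)$, and convert ``$\OPT(S_{>i},\calM)\cup\{i\}\in\calM$'' into ``$i\notin\sigma(S_{>i},\calM)$'' via the fact that a basis of $S_{>i}$ has the same span as $S_{>i}$), and it is correct. The one mildly delicate point --- that $\OPT(S_{>i},\calM)$ is a \emph{maximal} independent subset of $S_{>i}$, which could fail only for a zero-weight element --- you flag and dispose of appropriately.
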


Finally, we present the characterization of the covering number we will use in our analysis.

\begin{theorem}[\cite{Tutte61,Nash-Williams64,Edmonds65}]
    For all $S \subseteq N$,
    \[
        \varphi(S, \calM) \quad = \quad \max_{T \subseteq S} \bigg\lceil \frac{\abs{T}}{\rho(T, \calM)} \bigg\rceil \enspace .
    \]
\end{theorem}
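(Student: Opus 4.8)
The plan is to prove the claimed identity as two opposing inequalities, since the $\ge$ direction is an elementary counting argument whereas the $\le$ direction is essentially a restatement of Edmonds' matroid union theorem and is where all the real work lies.

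First I would handle $\varphi(S,\calM)\ge\max_{T\subseteq S}\lceil\abs{T}/\rho(T,\calM)\rceil$. Take an optimal partition of $S$ into independent sets $S_1,\dots,S_r\in\calM$, so $r=\varphi(S,\calM)$, and fix an arbitrary $T\subseteq S$. The sets $T\cap S_1,\dots,T\cap S_r$ partition $T$, and each lies in $\calM$ by downward-closure, hence has cardinality at most $\rho(T,\calM)$. Summing gives $\abs{T}\le r\cdot\rho(T,\calM)$, and since $r$ is an integer, $r\ge\lceil\abs{T}/\rho(T,\calM)\rceil$; maximizing over $T$ gives the bound.

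For the reverse inequality I would invoke Edmonds' matroid union theorem \cite{Edmonds65}: $\calM^k=\bigvee_{\ell\in[k]}\calM$ is a matroid whose rank function is $\rho_k(U)=\min_{T\subseteq U}\big(\abs{U\setminus T}+k\cdot\rho(T,\calM)\big)$. Set $k\coloneqq\max_{T\subseteq S}\lceil\abs{T}/\rho(T,\calM)\rceil$ (if $\calM$ has a loop inside $S$ then both sides of the claim are $+\infty$, so assume it does not, whence $\rho(T,\calM)\ge 1$ for every nonempty $T\subseteq S$). Because $\rho_k(S)\le\abs{S}$ always, taking $T=\emptyset$, one has $S\in\calM^k$ iff $\rho_k(S)=\abs{S}$ iff $\abs{S\setminus T}+k\cdot\rho(T,\calM)\ge\abs{S}$, i.e.\ $k\cdot\rho(T,\calM)\ge\abs{T}$, for every $T\subseteq S$. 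This last inequality holds by the choice of $k$, since $k\ge\lceil\abs{T}/\rho(T,\calM)\rceil\ge\abs{T}/\rho(T,\calM)$. Hence $S\in\calM^k$, so by the definition of the union $S$ decomposes into $k$ members of $\calM$, i.e.\ $\varphi(S,\calM)\le k$, and the two bounds coincide.

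The main obstacle is, of course, the matroid union theorem itself, which I would treat as a black box rather than reprove. A self-contained argument would need the classical augmenting-path proof of matroid union --- showing that if $S$ admits no cover by $k$ independent sets then some $T\subseteq S$ certifies $\abs{T}>k\cdot\rho(T,\calM)$ --- but reproducing that here would only lengthen the paper without new insight. The only fussy points, both dispatched above, are the loop case (where both sides are infinite) and checking the $T=\emptyset$ endpoint in the union rank formula, so that $S\in\calM^k$ really is equivalent to the stated family of inequalities.
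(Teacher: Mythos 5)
The paper itself offers no proof of this theorem; it is stated with citations to Tutte, Nash--Williams, and Edmonds and invoked as a known classical fact, with the paper's own work beginning at \Cref{cor:NashWilliams}. Your argument is correct and fills that gap cleanly. The $\ge$ direction is the right elementary counting bound: each $T\cap S_j$ is independent and contained in $T$, hence has size at most $\rho(T,\calM)$. For the $\le$ direction, your reduction to Edmonds' matroid-union rank formula is the standard route and is carried out correctly: $S\in\calM^k$ iff $\rho_k(S)=\abs{S}$, and since the $T=\emptyset$ term already gives $\rho_k(S)\le\abs{S}$, this is equivalent to $\abs{S\setminus T}+k\,\rho(T,\calM)\ge\abs{S}$, i.e.\ $k\,\rho(T,\calM)\ge\abs{T}$, for all $T\subseteq S$, which holds by your choice of $k$. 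You also correctly dispatch the degenerate loop case, where both sides are $+\infty$, and the vacuous $T=\emptyset$ term. Treating the matroid-union theorem as a black box is the appropriate level of granularity here, since that is where the combinatorial content lives; unwinding it into an augmenting-path argument would just reprove Edmonds' theorem. Your only stylistic choice to flag is that the paper's framing (and \Cref{cor:NashWilliams}) is in terms of the ceiling expression, whereas your proof silently drops the ceiling in the inequality $k\ge\lceil\abs{T}/\rho(T,\calM)\rceil\ge\abs{T}/\rho(T,\calM)$; this is fine because $k$ is an integer, but worth making explicit since the ceiling is exactly what makes the $\ge$ direction tight.
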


\begin{corollary} \label{cor:NashWilliams}
    Let $\calF \subseteq 2^N$ be the collection of flats of $\calM$. Then for all $S \subseteq N$,
    \[
        \varphi(S, \calM) \quad = \quad \max_{F \in \calF} \bigg\lceil \frac{\abs{S \cap F}}{\rho(F, \calM)} \bigg\rceil \enspace .
    \]
\end{corollary}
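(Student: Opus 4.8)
The plan is to derive the corollary directly from the Nash--Williams theorem quoted just above, using one standard structural fact: taking the span of a set does not change its rank, i.e.\ $\rho(\sigma(T,\calM),\calM) = \rho(T,\calM)$ for every $T \subseteq N$. (Sketch of the fact: if $B \subseteq T$ is a maximum independent subset, then $T \subseteq \sigma(B,\calM)$, so $\sigma(T,\calM) = \sigma(B,\calM)$, and $B$ is a basis of this flat, whence its rank equals $\abs{B} = \rho(T,\calM)$.) I will also use that $\lceil\cdot\rceil$ is monotone and that $x \mapsto \lceil c/x\rceil$ is nonincreasing in $x>0$ for fixed $c \ge 0$. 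For cleanliness I would first dispose of degenerate cases: if $S = \emptyset$ both sides are $0$, and if $\calM$ has a loop lying in $S$ then both sides are $+\infty$ (under the convention $c/0 = \infty$ for $c>0$, noting $\{$loops$\} = \sigma(\emptyset,\calM)$ is a flat of rank $0$), so we may assume $S \neq \emptyset$ and that every flat appearing below has positive rank.

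For the inequality $\varphi(S,\calM) \ge \max_{F\in\calF} \lceil \abs{S\cap F}/\rho(F,\calM)\rceil$: fix a flat $F$ and apply the Nash--Williams theorem with the test set $T = S\cap F \subseteq S$. Since $T \subseteq F$ we have $\rho(T,\calM) \le \rho(F,\calM)$, hence $\abs{S\cap F}/\rho(F,\calM) \le \abs{T}/\rho(T,\calM)$, and therefore
\[
  \left\lceil \frac{\abs{S\cap F}}{\rho(F,\calM)} \right\rceil \;\le\; \left\lceil \frac{\abs{T}}{\rho(T,\calM)} \right\rceil \;\le\; \max_{T' \subseteq S} \left\lceil \frac{\abs{T'}}{\rho(T',\calM)} \right\rceil \;=\; \varphi(S,\calM).
\]
Taking the maximum over $F \in \calF$ yields this direction.

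For the reverse inequality $\varphi(S,\calM) \le \max_{F\in\calF} \lceil \abs{S\cap F}/\rho(F,\calM)\rceil$: by the Nash--Williams theorem pick $T^\star \subseteq S$ (nonempty, using $S\neq\emptyset$) attaining $\varphi(S,\calM) = \lceil \abs{T^\star}/\rho(T^\star,\calM)\rceil$, and let $F^\star = \sigma(T^\star,\calM)$, which is a flat. Then $T^\star \subseteq S$ and $T^\star \subseteq F^\star$, so $T^\star \subseteq S\cap F^\star$ and $\abs{S\cap F^\star} \ge \abs{T^\star}$; moreover $\rho(F^\star,\calM) = \rho(T^\star,\calM)$ by the span-preserves-rank fact. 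Hence
\[
  \left\lceil \frac{\abs{S\cap F^\star}}{\rho(F^\star,\calM)} \right\rceil \;\ge\; \left\lceil \frac{\abs{T^\star}}{\rho(T^\star,\calM)} \right\rceil \;=\; \varphi(S,\calM),
\]
so the maximum over all flats is at least $\varphi(S,\calM)$. Combining the two inequalities proves the corollary.

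I do not expect a genuine obstacle here; the entire content is isolating the fact $\rho(\sigma(T,\calM),\calM) = \rho(T,\calM)$, which is exactly what lets us replace an arbitrary witness set $T$ in Nash--Williams by the flat it spans: passing from $T$ to $\sigma(T,\calM)$ can only enlarge the numerator-eligible ground set (so $\abs{S\cap \sigma(T)} \ge \abs{T}$) while leaving the denominator $\rho$ unchanged, hence never decreases the ratio. The only mild care needed is the bookkeeping for loops and the empty set mentioned above.
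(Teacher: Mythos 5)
Your proof is correct and uses essentially the same ideas as the paper's own argument: the paper also passes from an arbitrary witness set $T$ in the Nash--Williams maximum to the flat $\sigma(T,\calM)$, using exactly the fact that this preserves the rank while only enlarging the intersection with $S$. The only cosmetic difference is that you organize the argument as two inequalities (plus explicit handling of loops and the empty set) where the paper writes a chain of equalities of maxima.
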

\begin{proof}
Observe that
\begin{align*}
    \max_{T \subseteq S} \bigg\lceil \frac{\abs{T}}{\rho(T, \calM)} \bigg\rceil \quad 
    &= \quad \max_{T \subseteq S} \bigg\lceil \frac{\abs{S \cap T}}{\rho(T, \calM)} \bigg\rceil \quad = \quad \max_{T \subseteq N} \bigg\lceil \frac{\abs{S \cap T}}{\rho(T, \calM)} \bigg\rceil \quad = \quad \max_{F \in \calF} \bigg\lceil \frac{\abs{S \cap F}}{\rho(F, \calM)} \bigg\rceil \enspace .
\end{align*}
The second equality holds because maximizing over a larger set can only increase the maximum, but all the newly considered $T \not\subseteq S$ can only decrease the value of $\abs{S \cap T}/\rho(T, \calM)$. The third equality holds because if the maximum is attained at $T \subseteq N$, it is also attained at the flat $\sigma(T, \calM)$, as taking the span can only increase the numerator while keeping the denominator the same.
\end{proof}

\section{Algorithm}

Our algorithm takes as input a matroid $\calM \subseteq 2^N$ and a positive integer $k$ representing the feasibility constraint $\calM^k$. In this section, ranks and covering numbers are always with respect to the matroid $\calM$, so we directly notate, e.g., $\varphi(S) \coloneqq \varphi(S, \calM)$.

Suppose that the elements are sorted in decreasing order by weight, i.e., $w(\{1\}) > \dots > w(\{n\})$, as the behavior of our algorithm is invariant under labeling of the elements.

Let $k \leq n$, or else accepting every possible element yields the max-weight basis. Also let $k = \Omega(\log(n))$ be sufficiently large, or else any algorithm is vacuously $(1 - O(\sqrt{\log(n)/k}))$-competitive.

Define $\varepsilon : \mathbb{R}_+ \to \mathbb{R}_+$ by $\varepsilon(x) = C\sqrt{\log(n)/x}$, where $C$ is a constant chosen such that it is sufficiently large and satisfies $L \coloneqq \log(1/\varepsilon(k)) - 2$ is a positive integer. For example, we can let $k \geq 160^2\log(n)$ and $C \in [10, 20]$.

Define $(Z_0, Z_1, \dots, Z_L) \sim \mu$ as follows: throw $n$ balls into bins $0, 1, \dots, L$ so that each ball independently lands in bin $\ell$ w.p.\ $2^{\max\{1, \ell\}-L-1}$. Then each $Z_\ell$ is the number of balls in bin $\ell$.

\paragraph{Explanation of the Algorithm.}
Our algorithm is presented as \Cref{alg:kFoldMatroidSecretary} below. It consists of a sequence of $L$ phases, with each phase doubling in length and using the previously observed elements to choose a thresholding rule for the newly observed arrivals. Each phase is allowed to accept elements so long as their covering number is not much larger proportionally than the length of the phase. For example, phase $L$, which processes $\approx n/2$ elements, should only accept a subset of elements $\ALG_L$ so that $\varphi(\ALG_L) \approx k/2$. We now describe how the algorithm behaves in the final phase, which captures its behavior in every phase.

In the final phase $L$, the algorithm has so far observed the first $Z_0 + Z_1 + \dots + Z_{L-1}$ arrivals, which constitute half the elements in expectation. The other $Z_L$ elements have yet to arrive. The algorithm can then be thought of as sub-sampling each element w.p.\ $1-\varepsilon(r_L)$, denoting by $A_{\to L}$ the sub-sampled elements of the already observed half, and denoting by $A_L$ the sub-sampled elements of the remaining half. We can think of $A_{\to L}$ as a ``sample'' which determines the thresholding rule for the phase, and of $A_L$ as the new arrivals ``eligible'' for selection. The explicit sub-sampling allows the concentration arguments to go through cleanly in the analysis. Note that it is not strictly necessary to define $A_L$, but the notation makes the analysis clearer.

The algorithm then processes the $Z_L$ remaining elements, accepting an element if improves $A_{\to L}$, is in $A_L$, and does not violate the phase's capacity constraint.

\begin{algorithm}[ht]
    \caption{Algorithm for the secretary problem with feasibility constraint $\calM^k$.}
    \label{alg:kFoldMatroidSecretary}
    
    $\ALG \gets \emptyset$\;

    $(Z_0, Z_1, \dots, Z_L) \sim \mu$\;
    
    Observe the first $Z_0$ arrivals\;

    \For{\emph{phase} $\ell \in [L]$}{
        $\ALG_\ell, A_{\to\ell}, A_\ell \gets \emptyset$\;
        
        $r_\ell \gets 2^{\ell-L-1} k$\;

        \For{$i \in \text{\emph{all observed arrivals}}$}{
            $X \sim \Ber(1-\varepsilon(r_\ell))$\;

            \If{$X = 1$}{
                $A_{\to\ell} \gets A_{\to\ell} \cup \{i\}$\;
            }
        }
        
        \For{$i \in \text{\emph{the next $Z_\ell$ arrivals}}$}{
            $X \sim \Ber(1-\varepsilon(r_\ell))$\;

            \If{$X = 1$}{
                $A_{\ell} \gets A_{\ell} \cup \{i\}$\;

                \If{$i$ \emph{improves} $A_{\to\ell}$ \emph{w.r.t.}\ $\calM^{r_\ell}$ \emph{and} $i \not\in \sigma(\ALG_\ell, \calM^{(1 + \varepsilon(r_\ell)) r_\ell})$}{
                    $\ALG_\ell \gets \ALG_\ell \cup \{i\}$\;
                }
            }
            
        }

        $\ALG \gets \ALG \cup \ALG_\ell$\;
    }

    \Return $\ALG$\;
\end{algorithm}

\paragraph{Analysis.}
The analysis defines a few sets of interest: $S^+ \subseteq A_{\to\ell}$ is the overall max-weight basis intersected with the sample, $S^* \subseteq A_{\to\ell}$ is the max-weight basis of the sample w.r.t.\ the scaled down feasibility constraint, $T^* \subseteq A_\ell$ is the set of eligible new arrivals which improve the sample, and $\ALG_\ell$ is the subset of $T^*$ accepted by the algorithm given the scaled covering number constraint. Note that $\E[w(S^+)]$ is the actual benchmark we want to compare $\E[w(\ALG_\ell)]$ against.

We argue by concentration that $T^* = \ALG_\ell$ w.h.p.\ (\Cref{lemma:ImprovementIsIndependent}), and that $\E[w(S^+)] \approx \E[w(S^*)]$ (\Cref{lemma:ThinningAlsoThinsOccupancy}). Though not immediately apparent, we also have $\E[w(S^*)] = \E[w(T^*)]$ by symmetry, since both $S^*$ and $T^*$ are the elements which improve $A_{\to\ell}$ (\Cref{lemma:ExpectedUtility}). Putting everything together yields $\E[w(S^+)] \approx \E[w(S^*)] = \E[w(T^*)] \approx \E[w(\ALG_\ell)]$ (\Cref{prop:ExpectedUtility}).

We now proceed with the formal statements and proofs.

\begin{proposition} \label{prop:ExpectedUtility}
    Let $p \in [\varepsilon(k), 1/2]$ and $r \geq (1+\varepsilon(pk))pk$. Let $X_1, \dots, X_n \sim \Ber(2p)$ and $Y_1, \dots, Y_n \sim \Ber(1/2)$ be mutually independent, and define $S = \{i \in N : X_i = 1, Y_i = 1\}$ and $T = \{i \in N : X_i = 1, Y_i = 0\}$. Let $T^* \subseteq T$ be the set of elements that improve $S$ w.r.t.\ $\calM^r$. Then
    \[
        \E\Big[w(T^*) \cdot \I\Big(\varphi(T^*) < (1+\varepsilon(r)) r\Big)\Big] \quad \geq \quad p(1 - 2/n) \cdot w(\OPT(N, \calM^k)) \enspace .
    \]
\end{proposition}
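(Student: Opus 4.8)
The plan is to decompose the target quantity through a chain of inequalities mirroring the outline: $\E[w(S^+)] \approx \E[w(S^*)] = \E[w(T^*)]$, together with the high-probability event that the covering-number constraint is not binding. First I would define the relevant auxiliary sets precisely in this sampling model: let $A = \{i : X_i = 1\}$ be the sub-sampled pool (each element present with probability $2p$), let $S \subseteq A$ be the further sub-sample (so each element of $A$ is in $S$ independently with probability $1/2$), and let $T = A \setminus S$. Let $S^+ = \OPT(N, \calM^k) \cap A$ be the max-weight basis restricted to the pool, let $S^* = \OPT(A, \calM^r)$... wait, more precisely $S^*$ should be the elements of $S$ improving $\emptyset$, i.e.\ $\OPT(S, \calM^r)$ — actually, to match the symmetry argument, I would set $S^*$ to be the set of elements of $S$ that improve $S$ restricted appropriately; the cleanest version is: an element $i \in A$ either lands in $S$ or $T$ with equal probability independently, and "$i$ improves $S$ w.r.t.\ $\calM^r$" is a property of $i$ together with the higher-weight elements of $S$. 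The key structural observation (Fact 1.17) is that whether $i$ improves a set depends only on which higher-weight elements are in that set and their span.

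The symmetry step ($\E[w(S^*)] = \E[w(T^*)]$) is the heart of the argument: condition on the pool $A$ and on the weights. For each $i \in A$, consider swapping the coin $Y_i$. The set $T^* = \{i \in T : i \text{ improves } S \text{ w.r.t.\ } \calM^r\}$ and the analogously defined $S^* = \{i \in S : i \text{ improves } T \text{ w.r.t.\ } \calM^r\}$ — hmm, I need to be careful that the "sample used for the threshold" in the $S^*$ case is the complementary set. Actually the right framing: $T^*$ consists of $i \in T$ such that $i \notin \sigma(\{j \in S : w_j > w_i\}, \calM^r)$ by Fact 1.17. Define $S^*$ symmetrically with the roles of $S$ and $T$ exchanged. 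Then the joint distribution of $(S, T)$ is exchangeable, so $\E[w(T^*)] = \E[w(S^*)]$ by relabeling. This gives the middle equality. The approximation $\E[w(S^+)] \le \E[w(S^*)] / (\text{something close to }1)$ or $\E[w(S^*)] \ge (1-o(1))\E[w(S^+)]$ should follow from the cited Lemma (referred to as \Cref{lemma:ThinningAlsoThinsOccupancy} in the outline): thinning $\OPT(N,\calM^k)$ down to $\OPT(\text{sample}, \calM^r)$ where $r \ge (1+\varepsilon(pk))pk$ and the sample keeps each element with probability roughly $p$ loses only a $(1-o(1))$ factor, because $\varphi$ concentrates and $r$ is chosen with the right slack.

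Next I would handle the indicator $\I(\varphi(T^*) < (1+\varepsilon(r)) r)$. Since $T^* \subseteq T$ and $T$ is a random sub-sample, I expect $\varphi(T^*) \le \varphi(T)$, and $\E[\varphi(T)]$ is close to $pk$ (or more precisely, controlled via Corollary 1.16 applied flat-by-flat, since $|T \cap F|$ is a sum of independent Bernoullis with mean $\approx p|\OPT(N,\calM^k)\cap F|$-ish, concentrating around $\rho(F)$ scaled appropriately). A union bound over flats $F \in \calF$ (there are at most $2^n$, killed by the $\log n$ in $\varepsilon$) gives that $\varphi(T^*) < (1+\varepsilon(r))r$ with probability $\ge 1 - 1/n$ or so — this is precisely where the "$(1-2/n)$" factor and the $\sqrt{\log n / k}$ form of $\varepsilon$ enter. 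I would then combine: $\E[w(T^*) \I(\varphi(T^*) < (1+\varepsilon(r))r)] \ge \E[w(T^*)] - \E[w(T^*)\I(\varphi(T^*) \ge (1+\varepsilon(r))r)]$, and bound the error term by noting $w(T^*) \le w(\OPT(N,\calM^k))$ always (since $T^* \in \calM^r$... actually $T^*$ need not be in $\calM^k$, but $w(T^*) \le w(N)$ trivially, and more carefully one uses that the bad event is rare enough that even $w(N) \cdot \Pr[\text{bad}]$ is negligible, or restricts attention to $w(\OPT)$ via a truncation).

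The main obstacle I anticipate is making the symmetry argument fully rigorous when the covering-number indicator is present — the event $\{\varphi(T^*) < (1+\varepsilon(r))r\}$ is \emph{not} symmetric under the $S \leftrightarrow T$ swap, so one cannot simply push the indicator through the exchangeability. The resolution is to prove the symmetry for the \emph{unconstrained} weights $\E[w(S^*)] = \E[w(T^*)]$ first, and only afterwards subtract off the (small) contribution of the bad event — i.e.\ the constraint-free identity and the concentration bound must be kept separate and combined at the end. A secondary subtlety is that Fact 1.17 is stated for $\calM$, not $\calM^r$, so I would need the (routine) observation that flats and spans of $\calM^r$ relate cleanly to those of $\calM$, or simply apply all matroid facts directly to the matroid $\calM^r$ itself, treating it as the ambient matroid throughout the proposition.
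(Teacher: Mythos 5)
Your overall architecture matches the paper's: establish an unconstrained identity $\E[w(S^*)] = \E[w(T^*)]$, show that the covering-number constraint binds only on a negligible tail via a flat-by-flat union bound, and keep the two separate until the end. The observation that the indicator $\I(\varphi(T^*) < (1+\varepsilon(r))r)$ is not swap-symmetric and must be handled by subtraction afterwards is exactly how the paper proceeds. However, three concrete steps in your sketch would not survive formalization.

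\emph{The definition of $S^*$, and hence the symmetry argument.} You define $S^*$ by swapping $S \leftrightarrow T$, i.e.\ $S^* = \{i \in S : i \text{ improves } T\}$, and invoke exchangeability of $(S,T)$ to get $\E[w(S^*)] = \E[w(T^*)]$. This identity is numerically correct, but it loses the crucial structural property the proof needs downstream. The paper instead takes $S^* = \OPT(S, \calM^r)$ — the elements of $S$ that improve $S$ itself — and gets the identity $\E[w(S^*)] = \E[w(T^*)]$ from a different observation: by \Cref{fact:Improves}, whether $i$ improves $S$ depends only on $S \cap [i-1]$, which is independent of the coin $Y_i$ that decides whether $i$ lands in $S$ or $T$. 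The payoff is that the paper's $S^*$ is the max-weight subset of $S$ independent in $\calM^r$, so whenever $\varphi(S^+) < r$ one has the \emph{pointwise} inequality $w(S^*) \geq w(S^+)$ with $S^+ = S \cap \OPT(N, \calM^k)$. Your swap-defined $S^*$ is not $\OPT(S, \calM^r)$ and does not dominate $S^+$ pointwise (e.g.\ if $T$ happens to span everything, your $S^* = \emptyset$ while $S^+$ may be large), so the bridge from $\E[w(S^*)]$ to $p\,w(\OPT(N,\calM^k))$ is missing in your version. Also note your $S^+$ should be $S \cap \OPT(N,\calM^k)$, not $A \cap \OPT(N,\calM^k)$; otherwise the clean identity $\E[w(S^+)] = p\, w(\OPT(N,\calM^k))$ does not hold.

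\emph{The concentration of $\varphi(T^*)$.} You propose $\varphi(T^*) \leq \varphi(T)$ and control of $\E[\varphi(T)] \approx pk$, with $|T \cap F|$ having mean ``$\approx p|\OPT(N,\calM^k)\cap F|$''. Neither half is right. The set $T$ is a $p$-thinning of all of $N$, not of $\OPT(N,\calM^k)$, so $|T \cap F|$ has mean $p|F|$, and $\varphi(T)$ can be far larger than $pk$ (think of a rank-$1$ flat $F$ containing most of $N$). The inequality $\varphi(T^*) \leq \varphi(T)$ is therefore too lossy to be useful. The proof of \Cref{lemma:ImprovementIsIndependent} does something essentially different: for each flat $F$ it constructs the \emph{predictable} set $U$ of elements of $(S \cup T) \cap F$ that improve $S$, observes that $S^* \cap F$ and $T^* \cap F$ are the heads and tails (resp.) of fair coins indexed by $U$, and uses the \emph{deterministic} fact $|S^* \cap F| \leq r \cdot \rho(F)$ (because $S^* \in \calM^r$) to bound $|T^* \cap F|$ via a Chernoff bound on the coins. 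Without that coupling to $S^*$, you have no handle on $\varphi(T^*)$.

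\emph{The union bound.} A union bound over all ``$\le 2^n$'' flats, killed only by a $\log n$ in $\varepsilon$, does not close: the Chernoff exponent is on the order of $\varepsilon(r)^2 r = \Theta(\log n)$, which beats $\poly(n)$ but not $2^n$. The paper stratifies flats by rank, bounding $|\calF_j| \leq n^j$ and getting tail probability $n^{-5j}$ for rank-$j$ flats, so the union bound telescopes correctly. You would need the same stratification.

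In short: the skeleton is right, but you need the paper's specific choice $S^* = \OPT(S, \calM^r)$ together with the $Y_i$-independence argument (rather than the $S \leftrightarrow T$ swap), and you need the coupled, rank-stratified concentration bound for $\varphi(T^*)$; the shortcuts you sketch for both steps do not work.
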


We first prove some helpful lemmas, which reuse the notation defined in~\Cref{prop:ExpectedUtility}. For all $j \in [\rho(N)]$, let $\calF_j \subseteq \calM$ be the collection of flats of rank $j$ in $\calM$, and observe that $\abs{\calF_j} \leq n^j$. Let $S^* = \OPT(S, \calM^r)$ and $S^+ = S \cap \OPT(N, \calM^k)$.

\begin{lemma} \label{lemma:ImprovementIsIndependent}
    \[
        \Pr[\varphi(T^*) \geq (1+\varepsilon(r)) r] \quad \leq \quad 1/n^3 \enspace .
    \]
\end{lemma}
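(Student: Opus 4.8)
The plan is to bound $\varphi(T^*)$ using the Nash--Williams characterization from Corollary~\ref{cor:NashWilliams}, which says $\varphi(T^*) = \max_{F} \lceil |T^* \cap F| / \rho(F) \rceil$ over flats $F$ of $\calM$. So it suffices to show that with probability at least $1 - 1/n^3$, every flat $F$ of rank $j$ satisfies $|T^* \cap F| < (1 + \varepsilon(r)) r j$ (roughly), and then union-bound over all flats. Since there are at most $n^j$ flats of rank $j$ and at most $\rho(N) \le n$ distinct ranks, a union bound over all flats costs a factor of roughly $n \cdot n^j \le n^{j+1}$, so I will need a per-flat tail bound of the form $\Pr[|T^* \cap F| \ge (1+\varepsilon(r)) r j] \le n^{-(j+4)}$ or so to get the total failure probability below $1/n^3$. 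The key structural observation is that $T^* \subseteq T$ and $T$ is obtained by independently keeping each element with probability $p$ (i.e.\ $X_i = 1, Y_i = 0$, which happens w.p.\ $p$), \emph{independently} of $S$; and crucially, whether $i \in T^*$ (i.e.\ $i$ improves $S$ w.r.t.\ $\calM^r$) depends only on $S$ and on $i$ itself, not on which other elements are in $T$. Therefore, conditioned on $S$, the set $T^* \cap F$ is a sum of independent Bernoulli$(p)$ indicators over the elements of $F$ that improve $S$ w.r.t.\ $\calM^r$ — and there are at most $\rho(F, \calM^r) \le r \cdot \rho(F, \calM) = rj$ such elements in $F$ (since the improving elements of $F$ form an independent set extending $\OPT(S \cap \sigma_{\le}, \cdot)$, hence have size at most the rank of $F$ in $\calM^r$).

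So the per-flat argument goes: fix a flat $F$ of rank $j$ in $\calM$. Conditioned on $S$, let $I_F(S)$ be the set of elements of $F$ that improve $S$ w.r.t.\ $\calM^r$; by Fact~\ref{fact:Improves} applied to $\calM^r$ and the fact that improving elements form an independent set in $\calM^r$ restricted to $F$, we have $|I_F(S)| \le \rho(F, \calM^r) \le r j$. Then $|T^* \cap F| = \sum_{i \in I_F(S)} \mathbbm{1}[i \in T]$ conditioned on $S$ is a sum of at most $rj$ independent Bernoulli$(p)$ variables, with mean at most $prj$. Wait — I should double-check the target: we want $|T^* \cap F| / j < (1+\varepsilon(r)) r$, i.e.\ $|T^* \cap F| < (1+\varepsilon(r)) r j$, and the mean is at most $p r j \le \tfrac12 r j$ since $p \le 1/2$. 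So we need the sum to not exceed roughly $2$ times its (maximum possible) mean when $p$ is close to $1/2$, or more precisely to not exceed $(1+\varepsilon(r)) r j$. Hmm, actually for $p < 1/2$ this is a large deviation and Chernoff gives an easy bound; the tight case is $p = 1/2$, where we need $\sum \le (1+\varepsilon(r)) r j$ with mean exactly $r j / 2$ over at most $rj$ trials — that's a deviation to $(1+\varepsilon(r))$ times the count bound, i.e.\ the trivial upper bound $rj$ is already below $(1+\varepsilon(r))rj$! So in fact when $p = 1/2$ the event $|T^* \cap F| \ge (1+\varepsilon(r))rj$ is impossible (the sum is at most the number of terms $\le rj < (1+\varepsilon(r))rj$). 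The real content is therefore a multiplicative Chernoff upper-tail bound: $\Pr[\mathrm{Binom}(m, p) \ge (1+\varepsilon(r)) r j]$ where $m \le rj$ and the mean $\mu = mp$. Using $r \ge (1+\varepsilon(pk))pk$ and $p \ge \varepsilon(k)$, we have $r = \Omega(\varepsilon(k) k) = \Omega(\sqrt{k \log n})$, so $rj = \Omega(\sqrt{k\log n})$ and $\varepsilon(r) = \Theta(\sqrt{\log(n)/r})$; the Chernoff bound $\exp(-\varepsilon(r)^2 \mu / 3)$ with $\mu \gtrsim$ a constant times $r j$ (when $p$ is bounded below, say $p \ge \varepsilon(k)$, we get $\mu = mp$, which could be small if $m$ is small, but then the deviation is enormous and we still win) gives $\exp(-\Theta(\log(n) \cdot j)) = n^{-\Theta(j)}$, and by tuning the constant $C$ in $\varepsilon$ we make this at most $n^{-(j+4)}$.

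The main obstacle I anticipate is being careful about the conditioning and the definition of "improves" so that $T^* \cap F$ is genuinely a sum of independent indicators conditioned on $S$: I need that (i) $S$ and $T$ are independent (true, since for each $i$ the pair $(X_i Y_i,\, X_i(1-Y_i))$ — membership in $S$ vs.\ $T$ — has the property that the two coordinates are negatively correlated but conditioning on $S$ leaves $\{i \in T\}_{i \notin S}$ a product measure... actually since $S$ and $T$ are disjoint, I should condition on the full vector $(\mathbbm{1}[i\in S])_i$, and then for $i \notin S$ the event $i \in T$ has probability $\Pr[X_i = 1, Y_i = 0 \mid \text{not }(X_i=1,Y_i=1)]$, which is a fixed value, independently across $i$); and (ii) the event "$i$ improves $S$ w.r.t.\ $\calM^r$" is measurable w.r.t.\ $S$ and $i$ alone, so that $I_F(S)$ is determined by $S$. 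Both are routine once set up correctly, but the bookkeeping about the conditional distribution of $T$ given $S$ (it is not exactly Bernoulli$(p)$ per element after conditioning, but Bernoulli$(p/(1-p))$ or similar — I should recompute: $\Pr[i \in T \mid i \notin S] = p / (1-p)$, since $\Pr[i \notin S] = 1 - p$ and $\Pr[i \in T] = p$; for $p \le 1/2$ this is at most $1$ and at most $2p$, so the Chernoff bound still goes through with mean $\le |I_F(S)| \cdot \tfrac{p}{1-p} \le rj \cdot 2p \le rj$) needs to be done cleanly. After the per-flat bound, the union bound over the $\le \sum_{j=1}^{\rho(N)} |\calF_j| \le \sum_j n^j \le n^{\rho(N)+1}$... wait, that's too many; I need the bound to be $n^{-(j+4)}$ per flat of rank $j$ so that $\sum_j |\calF_j| \cdot n^{-(j+4)} \le \sum_j n^j \cdot n^{-(j+4)} = \sum_j n^{-4} \le n \cdot n^{-4} = n^{-3}$, which closes the argument.
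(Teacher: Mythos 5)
Your overall plan (Nash--Williams characterization, union-bound over flats of each rank, with a per-flat probability target of roughly $n^{-(j+4)}$) matches the paper's. However, the per-flat step contains a genuine error, and it is the crux of the lemma.

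You claim that the set $I_F(S)$ of elements of $F$ that improve $S$ w.r.t.\ $\calM^r$ satisfies $\abs{I_F(S)} \le \rho(F, \calM^r) \le rj$, on the grounds that ``the improving elements of $F$ form an independent set.'' This is false. ``Improves'' is a \emph{pointwise} property: $i$ improves $S$ iff $i \notin \sigma(\{j \in S : w(\{j\}) > w(\{i\})\}, \calM^r)$; it says nothing about the joint independence of the improving elements. A concrete counterexample: let $\calM = \calU_1$ on $n$ elements and $r = 5$, so $\calM^r = \calU_5$, and let $S$ have three elements. Every $i \in N$ satisfies $i \in \OPT(S \cup \{i\}, \calU_5) = S \cup \{i\}$ (because $\abs{S \cup \{i\}} \le 4 \le 5$), so every element improves $S$. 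Taking $F = N$ (the unique rank-$1$ flat), $\abs{I_F(S)} = n$, not $\le rj = 5$. More generally, for small $S$ the set of improvers can be essentially all of $F$. What \emph{is} true, and what the paper uses, is that $S^* \cap F = \OPT(S, \calM^r) \cap F$ is independent in $\calM^r$, hence $\abs{S^* \cap F} \le rj$; but $S^*$ is the set of \emph{chosen} elements, not the set of \emph{candidate} improvers. You also noticed a symptom of the error yourself --- your bound would make the event deterministically impossible when $p = 1/2$ --- but then proceeded with a Chernoff bound anyway rather than treating that as a red flag.

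Because of this, conditioning on $S$ does not cap the number of Bernoulli trials, so there is no Chernoff bound to apply, and the argument does not close. The paper's proof avoids the problem with a different, more delicate comparison: it exposes the $Y_i$'s sequentially over the random candidate set $U = (S \cup T) \cap F \cap \{i : i \text{ improves } S\}$ (not conditioning on $S$), so that $\{Y_i : i \in U\}$ is a sequence of i.i.d.\ fair coins with heads corresponding to $S^* \cap F$ and tails to $T^* \cap F$. The deterministic bound $\abs{S^* \cap F} \le rj$ on the \emph{heads} then converts the event $\abs{T^* \cap F} \ge (1+\varepsilon(r))rj$ into ``at most $rj$ heads among at least $(2+\varepsilon(r))rj$ fair coin flips,'' which is a lower-tail deviation of order $\varepsilon(r)$ on a mean of order $rj$, and Chernoff gives $n^{-5j}$. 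In short, the paper controls tails \emph{via} the known bound on heads rather than via a bound on the total number of trials; your proposal needs that second kind of bound, which does not hold.
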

\begin{proof}
Fix some $j \in [\rho(N)]$ and some $F \in \calF_j$. Initialize $U = \emptyset$, and for each $i = 1, \dots, n$ in sequence, set $U = U \cup \{i\}$ if $i \in F$, $X_i = 1$, and $i \not\in \sigma(S \cap [i-1], \calM^r)$. Since $w(\{1\}) > \dots > w(\{n\})$, we have by \Cref{fact:Improves} that $U$ is the set of elements of $(S \cup T) \cap F$ which improve $S$ w.r.t.\ $\calM^r$. Hence, $\abs{S^* \cap F} = \abs{\{i\in U: Y_i = 1\}}$ and $\abs{T^* \cap F} = \abs{\{i\in U: Y_i = 0\}}$.

Observe that $\{Y_i : i \in U\}$ is a sequence of independent coin flips. Additionally, $\varphi(S^*) \leq r$ because $S^* = \OPT(S, \calM^r)$. Therefore, we have $\abs{S^* \cap F} \leq rj$ by~\Cref{cor:NashWilliams}. Then the probability that $\abs{T^* \cap F} \geq (1+\varepsilon(r)) rj$ is equal to the probability that there are at most $rj$ heads among at least $(2+\varepsilon(r)) rj$ coin flips. Thus, $\Pr[\abs{T^* \cap F} \geq (1 + \varepsilon(r)) rj] \leq 1/n^{5j}$ by a Chernoff bound. Taking a union bound over all $F \in \calF_j$ and $j \in [\rho(N)]$, then applying~\Cref{cor:NashWilliams} completes the proof.
\end{proof}

\begin{lemma} \label{lemma:ThinningAlsoThinsOccupancy}
    \[
        \Pr[\varphi(S^+) \geq (1 + \varepsilon(pk))pk] \quad \leq \quad 1/n^3 \enspace .
    \]
\end{lemma}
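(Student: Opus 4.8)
The plan is to run the same union-bound-over-flats scheme as in \Cref{lemma:ImprovementIsIndependent}, now exploiting that the element weights are fixed, so $\OPT(N,\calM^k)$ is a \emph{deterministic} set and $S^+ = S \cap \OPT(N,\calM^k)$ is simply a $p$-random subset of it: each element lies in $S^+$ independently with probability $\Pr[X_i = 1, Y_i = 1] = p$. First I would record the deterministic input. Since $\OPT(N,\calM^k) \in \calM^k$ we have $\varphi(\OPT(N,\calM^k)) \le k$, so \Cref{cor:NashWilliams} gives $\abs{\OPT(N,\calM^k) \cap F} \le k\,\rho(F)$ for every flat $F$. Hence for a fixed $F \in \calF_j$ the count $\abs{S^+ \cap F}$ is a binomial random variable with mean $\mu_F \coloneqq p\,\abs{\OPT(N,\calM^k) \cap F} \le pkj$.

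Next I would translate the failure event. By \Cref{cor:NashWilliams}, $\varphi(S^+) \ge (1+\varepsilon(pk))\,pk$ forces some flat $F$, of some rank $j \in [\rho(N)]$, with $\lceil \abs{S^+ \cap F}/j \rceil \ge (1+\varepsilon(pk))\,pk$, hence $\abs{S^+ \cap F} > \big((1+\varepsilon(pk))\,pk - 1\big)\,j$. Because $p \ge \varepsilon(k)$ gives $pk \ge \varepsilon(k)\,k = C\sqrt{k\log n}$, the quantity $\varepsilon(pk)\,pk = C\sqrt{pk\log n}$ is at least $2$ in the admissible range of $C$ and $k$, so the additive $-1$ is absorbed and the failure event implies $\abs{S^+ \cap F} > (1+\varepsilon(pk)/2)\,pkj$ for some $F \in \calF_j$. (En route I would also check $\varepsilon(pk) \le 1$, which follows from $p \ge \varepsilon(k)$, $k \ge 160^2\log n$, and $C \le 20$.)

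The heart of the argument is a Chernoff bound on $\abs{S^+ \cap F}$ followed by a union bound. Writing $a = (1+\varepsilon(pk)/2)\,pkj$, the target $a$ exceeds the mean $\mu_F \le pkj$ by an additive gap of at least $\varepsilon(pk)\,pkj/2$, and since $\varepsilon(pk)^2\,pk = C^2\log n$ this gap is worth about $\sqrt{\log n}$ standard deviations; a Chernoff bound then gives $\Pr[\abs{S^+ \cap F} > a] \le n^{-C^2 j/12} \le n^{-8j}$ for $C \ge 10$. This is the one place calling for mild care, and the part I expect to be the main (small) obstacle: $\mu_F$ can be far \emph{below} $pkj$, since the flat $F$ may be barely used by $\OPT(N,\calM^k)$, so one cannot simply write the deviation as $(\varepsilon(pk)/2)\mu_F$. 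The remedy is to keep the target fixed at $a$: a smaller $\mu_F$ only enlarges the multiplicative deviation, so the bound $\Pr[X \ge (1+\delta)\mu] \le \exp(-\delta^2\mu/(2+\delta))$ --- which rearranges to $\exp\!\big(-(a-\mu_F)^2/(a+\mu_F)\big)$ and hence to $\exp(-\varepsilon(pk)^2 pkj/12)$ --- or, equivalently, a case split on whether $a \le 2\mu_F$ or $a \ge 2\mu_F$, handles every $\mu_F \le pkj$ uniformly with the same bound $n^{-\Omega(C^2 j)}$.

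Finally, there are at most $\abs{\calF_j} \le n^j$ flats of rank $j$, so by a union bound the chance that some rank-$j$ flat witnesses failure is at most $n^j \cdot n^{-8j} = n^{-7j}$; summing this geometric series over $j \in [\rho(N)]$ gives at most $2n^{-7} \le 1/n^3$, as claimed. Aside from the Chernoff-parametrization point above, every step is a routine calculation paralleling \Cref{lemma:ImprovementIsIndependent}.
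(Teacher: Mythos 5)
Your proposal is correct and follows essentially the same route as the paper: bound $\abs{\OPT(N,\calM^k) \cap F}$ by $kj$ via \Cref{cor:NashWilliams}, observe that $\abs{S^+ \cap F}$ is stochastically dominated by $\Binom(kj,p)$ since each element of $\OPT(N,\calM^k)$ lands in $S^+$ independently with probability $p$, apply a Chernoff bound, and union-bound over all flats of all ranks. The only difference is that you spell out the absorption of the ceiling's $-1$ and the handling of flats with $\mu_F$ far below $pkj$ explicitly, whereas the paper treats these as implicit in the stochastic-domination step; both are minor technical points and your bookkeeping is consistent with the paper's.
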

\begin{proof}
Fix some $j \in [\rho(N)]$ and some $F \in \calF_j$. By~\Cref{cor:NashWilliams}, $\abs{\OPT(N, \calM^k) \cap F} \leq kj$. Since $S^+$ independently includes each element of $\OPT(N, \calM^k)$ with probability $p$, $\abs{S^+ \cap F}$ is stochastically dominated by $X \sim \Binom(kj, p)$. Therefore, the probability that $\abs{S^+ \cap F} \geq (1 + \varepsilon(pk))pkj$ is at most $1/n^{5j}$ by a Chernoff bound. Taking a union bound over all $F \in \calF_j$ and $j \in [\rho(N)]$, then applying~\Cref{cor:NashWilliams} completes the proof.
\end{proof}

\begin{lemma} \label{lemma:ExpectedUtility}
    \[
        \E[w(T^*)] \quad = \quad \E[w(S^*)] \enspace .
    \]
\end{lemma}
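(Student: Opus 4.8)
The plan is to prove this element by element: I will show that for every $i \in N$, the probability that $i \in S^*$ equals the probability that $i \in T^*$, after which the claim follows immediately by linearity of expectation, since $\E[w(S^*)] = \sum_{i} w(\{i\})\Pr[i \in S^*]$ and likewise for $T^*$.

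The first step is to put $S^*$ and $T^*$ into a common form governed by the same predicate. Because $i \in S$ implies $S \cup \{i\} = S$, the definition of \emph{improves} and the remark following it give that $S^* = \OPT(S, \calM^r) = \{i \in S : i \text{ improves } S \text{ w.r.t.\ } \calM^r\}$; by definition $T^* = \{i \in T : i \text{ improves } S \text{ w.r.t.\ } \calM^r\}$. So in both cases the relevant event is ``$i$ improves $S$ w.r.t.\ $\calM^r$''; write $E_i$ for this event. The crucial observation is that $E_i$ does not depend on $(X_i, Y_i)$: by \Cref{fact:Improves} and the fact that weights are sorted as $w(\{1\}) > \dots > w(\{n\})$, the event $E_i$ holds if and only if $i \notin \sigma(\{j < i : X_j = 1,\, Y_j = 1\}, \calM^r)$, which is a deterministic function of $(X_j, Y_j)_{j < i}$ alone. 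Hence $E_i$ is independent of the pair $(X_i, Y_i)$, which in turn is a pair of independent coin flips.

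Combining these facts, since $i \in S^* \iff X_i = 1,\, Y_i = 1,\, E_i$ and $i \in T^* \iff X_i = 1,\, Y_i = 0,\, E_i$, mutual independence gives
\[
    \Pr[i \in S^*] = \Pr[X_i = 1]\,\Pr[Y_i = 1]\,\Pr[E_i] = p\,\Pr[E_i] = \Pr[X_i = 1]\,\Pr[Y_i = 0]\,\Pr[E_i] = \Pr[i \in T^*] .
\]
Multiplying by $w(\{i\})$ and summing over $i \in N$ yields $\E[w(S^*)] = \E[w(T^*)]$. The only delicate points are the two structural observations — that $S^*$ is exactly the set of improving elements of $S$, so the same predicate controls membership in both sets, and that this predicate evaluated at $i$ never inspects $(X_i, Y_i)$ — both of which are consequences of \Cref{fact:Improves} together with the decreasing weight order; I do not expect a real obstacle beyond stating these carefully.
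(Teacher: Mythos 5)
Your proof is correct and follows essentially the same route as the paper's: both reduce to Fact~\ref{fact:Improves}, use the decreasing weight order to observe that the "improves" event for element $i$ depends only on $(X_j, Y_j)_{j<i}$, and invoke independence from $(X_i, Y_i)$. The paper computes $\E[w(T^*)]$ and $\E[w(S^*)]$ separately and notes they both equal $p\sum_i \Pr[i \notin \sigma(S \cap [i-1], \calM^r)]\,w(\{i\})$, whereas you establish the term-wise identity $\Pr[i \in S^*] = \Pr[i \in T^*]$ directly; this is a cosmetic difference, not a different argument.
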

\begin{proof}
Observe that
\begin{align*}
    \E[w(T^*)] \quad &= \quad \sum_{i=1}^n \Big(\Pr[i \in T^*] \cdot w(\{i\})\Big) \\
    &= \quad \sum_{i=1}^n \Big(\Pr[i \in T^* \given i \in T] \Pr[i \in T] \cdot w(\{i\})\Big) \\
    &= \quad p \sum_{i=1}^n \Big(\Pr[i \not\in \sigma(S \cap [i-1], \calM^r) \given i \in T] \cdot w(\{i\})\Big) \\
    &= \quad p \sum_{i=1}^n \Big(\Pr[i \not\in \sigma(S \cap [i-1], \calM^r)] \cdot w(\{i\})\Big) \enspace ,
\end{align*}
where the third equality follows by definition of $T^*$ and~\Cref{fact:Improves}, and the last equality follows because $i \in \sigma(S \cap [i-1], \calM^r)$ depends only on elements $[i-1]$ and is therefore independent from $i \in T$. Repeating the argument for $\E[w(S^*)]$ yields the same conclusion, completing the proof.
\end{proof}

\begin{proof}[Proof of~\Cref{prop:ExpectedUtility}]
Let $i$ be the highest weight element such that $\{i\} \in \calM$. Observe that $\Pr[i \in T^*] = \Pr[i \in T] = p \geq 1/n$. Then by~\Cref{lemma:ImprovementIsIndependent},
\begin{align*}
    \E[w(T^*)] \quad &= \quad \E[w(T^*) \cdot \I(\varphi(T^*) < (1+\varepsilon(r))r)] + \E[w(T^*) \cdot \I(\varphi(T^*) \geq (1+\varepsilon(r))r)] \\
    &\leq \quad \E[w(T^*) \cdot \I(\varphi(T^*) < (1+\varepsilon(r))r)] + n \cdot w(\{i\}) \cdot \Pr[\varphi(T^*) \geq (1+\varepsilon(r))r] \\
    &\leq \quad \E[w(T^*) \cdot \I(\varphi(T^*) < (1+\varepsilon(r))r)] + n^2 \E[w(T^*)] \Pr[\varphi(T^*) \geq (1+\varepsilon(r))r] \\
    &\leq \quad \E[w(T^*) \cdot \I(\varphi(T^*) < (1+\varepsilon(r))r)] + \E[w(T^*)]/n \enspace .
\end{align*}

Similarly, $\Pr[i \in S^+] = \Pr[i \in S] = p \geq 1/n$, so by~\Cref{lemma:ThinningAlsoThinsOccupancy},
\begin{align*}
    \E[w(S^+)] \quad &= \quad \E[w(S^+) \cdot \I(\varphi(S^+) < r)] + \E[w(S^+) \cdot \I(\varphi(S^+) \geq r)] \\
    &\leq \quad \E[w(S^+) \cdot \I(\varphi(S^+) < r)] + n \cdot w(\{i\}) \cdot \Pr[\I(\varphi(S^+) \geq r)] \\
    &\leq \quad \E[w(S^+) \cdot \I(\varphi(S^+) < r)] + n^2 \E[w(S^+)] \Pr[\varphi(S^+) \geq r] \\
    &\leq \quad \E[w(S^+) \cdot \I(\varphi(S^+) < r)] + \E[w(S^+)]/n \enspace .
\end{align*}

Chaining the above two inequalities and~\Cref{lemma:ExpectedUtility} together gives us
\begin{align*}
    \E[w(T^*) \cdot \I(\varphi(T^*) < (1+\varepsilon(r))r)] \quad &\geq \quad (1 - 1/n) \E[w(T^*)] \\
    &= \quad (1 - 1/n) \E[w(S^*)] \\
    &\geq \quad (1 - 1/n) \E[w(S^+) \cdot \I(\varphi(S^+) < r)] \\
    &\geq \quad (1 - 2/n) \E[w(S^+)] \\
    &= \quad p(1 - 2/n) \cdot w(\OPT(N, \calM^k)) \enspace . \qedhere
\end{align*}
\end{proof}

\begin{theorem}
    \Cref{alg:kFoldMatroidSecretary} is $(1 - O(\sqrt{\log(n)/k}))$-competitive.
\end{theorem}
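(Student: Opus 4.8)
The plan is to bound the competitive ratio by summing the guarantees from \Cref{prop:ExpectedUtility} across all $L$ phases. First I would set up the probabilistic coupling between the algorithm's execution and the i.i.d.\ model of \Cref{prop:ExpectedUtility}. In phase $\ell$, conditioned on the total number of elements observed so far plus the $Z_\ell$ new arrivals being some value $m$, each of those $m$ elements is a uniformly random element of $N$, independently assigned to the ``sample'' $A_{\to\ell}$ versus the ``eligible'' set $A_\ell$ with the right probabilities. The key observation is that the ball-throwing distribution $\mu$ is designed precisely so that each element lands in ``observed before phase $\ell$'' with probability $1/2$ and in ``the $Z_\ell$ new arrivals'' with probability $2^{\ell-L-1}$ relative to being seen by phase $\ell$, which matches the setup $X_i \sim \Ber(2p)$, $Y_i \sim \Ber(1/2)$ with $p = p_\ell \coloneqq 2^{\ell - L - 1}$ after accounting for the sub-sampling by $\varepsilon(r_\ell)$. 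So I would argue that $(S,T,T^*)$ in \Cref{prop:ExpectedUtility} has the same joint law as $(A_{\to\ell}, A_\ell, \{i \in A_\ell : i \text{ improves } A_{\to\ell} \text{ w.r.t. } \calM^{r_\ell}\})$, with $r = r_\ell = 2^{\ell-L-1}k = p_\ell k$, after a small adjustment for the sub-sampling probability being $1 - \varepsilon(r_\ell)$ rather than exactly matching — here I need $p_\ell \ge \varepsilon(k)$, which holds since $r_\ell = p_\ell k \ge r_1 = k/2^L \approx k\varepsilon(k) = \Theta(\sqrt{k\log n}) \gg \log n$, and $r_\ell \ge (1+\varepsilon(p_\ell k)) p_\ell k$ is not literally true so I would instead take $r = r_\ell$ with the understanding that the sub-sampling rate $1-\varepsilon(r_\ell)$ plays the role of a slightly-less-than-$1$ factor; the cleanest route is to verify the hypotheses of \Cref{prop:ExpectedUtility} hold with $p = p_\ell(1-\varepsilon(r_\ell))$ and $r = r_\ell$.

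Next I would connect the proposition's output to what the algorithm actually collects. The set $\ALG_\ell$ differs from $T^*$ only in that elements are rejected once $i \in \sigma(\ALG_\ell, \calM^{(1+\varepsilon(r_\ell))r_\ell})$; but on the event $\varphi(T^*) < (1+\varepsilon(r_\ell))r_\ell$, no such rejection ever triggers, because $\ALG_\ell \subseteq T^*$ has covering number below the threshold throughout, so every improving eligible element is in fact accepted, giving $\ALG_\ell = T^*$ on that event. Combined with \Cref{prop:ExpectedUtility}, this yields
\[
    \E[w(\ALG_\ell)] \;\ge\; \E\big[w(T^*)\cdot\I(\varphi(T^*) < (1+\varepsilon(r_\ell))r_\ell)\big] \;\ge\; p_\ell(1-\varepsilon(r_\ell))(1-2/n)\cdot w(\OPT(N,\calM^k)).
\]
Since the phases accept disjoint sets and $\ALG = \bigcup_\ell \ALG_\ell$ (and each $\ALG_\ell$ is feasible, with the union feasible because $\sum_\ell (1+\varepsilon(r_\ell))r_\ell \le k$ up to lower-order terms — this also needs a quick check, or one absorbs it into the error), summing over $\ell \in [L]$ gives
\[
    \E[w(\ALG)] \;\ge\; (1-2/n)\,w(\OPT(N,\calM^k)) \sum_{\ell=1}^L p_\ell(1-\varepsilon(r_\ell)).
\]
Then $\sum_{\ell=1}^L p_\ell = \sum_{\ell=1}^L 2^{\ell-L-1} = \tfrac12 - 2^{-L-1} = \tfrac12 - O(\varepsilon(k))$, and since $\varepsilon(r_\ell) = C\sqrt{\log n / r_\ell}$ is largest when $r_\ell$ is smallest, $\sum_\ell p_\ell \varepsilon(r_\ell) = C\sqrt{\log n}\sum_\ell 2^{\ell-L-1}/\sqrt{2^{\ell-L-1}k} = \tfrac{C\sqrt{\log n}}{\sqrt k}\sum_\ell 2^{(\ell-L-1)/2} = O(\sqrt{\log n/k})$. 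So $\sum_\ell p_\ell(1-\varepsilon(r_\ell)) \ge \tfrac12 - O(\sqrt{\log n/k})$.

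Finally I need to relate $w(\OPT(N,\calM^k))$ back to the true benchmark. The benchmark $S^+$ in the proposition uses the full sub-sampled half (so its expected weight picks up only a $p$ factor), but the quantity $\max_{S\in\calF}w(S) = w(\OPT(N,\calM^k))$ appears directly on the right-hand side, so no further loss is incurred there — the factor-$\tfrac12$ from $\sum p_\ell$ is exactly cancelled by the fact that the ``observed so far'' portion in the last phase is half the elements, and more carefully, the geometric sum over phases is what recovers the constant. Wait — this is the crux: I get $\E[w(\ALG)] \ge (\tfrac12 - O(\sqrt{\log n/k}))\,w(\OPT)$, which is only $1/2$-competitive, not $(1-o(1))$. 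So the main obstacle, and the step I expect to require the real work, is showing that the phases' contributions actually sum to $1 - O(\sqrt{\log n/k})$ rather than $1/2$: this must come from a telescoping/amortization argument where phase $\ell$'s benchmark is not the global $\OPT$ but the \emph{marginal} value not yet captured by phases $1,\dots,\ell-1$, i.e., the improving elements in phase $\ell$ are precisely those that the earlier, shorter phases could not have afforded to take. I would restructure the per-phase bound so that $\E[w(\ALG_\ell)]$ is compared against the weight of $\OPT(N,\calM^k)$ restricted to elements of rank-contribution in a ``band,'' and then the bands partition $\OPT$, making the sum telescope to $(1-o(1))\,w(\OPT)$. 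Nailing down that the doubling schedule makes these bands exhaust $\OPT$ with only $O(\sqrt{\log n/k})$ slack is the heart of the argument.
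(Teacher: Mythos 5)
Your proposal has the right skeleton and correctly identifies every moving part that the paper uses: coupling the algorithm's execution in phase $\ell$ with the i.i.d.\ model of \Cref{prop:ExpectedUtility} (with $p = 2^{\ell - L - 1}(1 - \varepsilon(r_\ell))$ and $r = r_\ell$), noting that on the event $\varphi(T^*) < (1+\varepsilon(r_\ell))r_\ell$ the phase's capacity constraint never triggers so $\ALG_\ell = T^*$, the feasibility check $\sum_\ell (1+\varepsilon(r_\ell))r_\ell < k$, and summing the per-phase guarantees. This is exactly the paper's proof. Where you go wrong is a single arithmetic slip in the final step that then sends you on a wild-goose chase after a nonexistent amortization argument.

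Specifically, you compute $\sum_{\ell=1}^L 2^{\ell - L - 1} = \tfrac12 - 2^{-L-1}$, but in fact the sum equals $1 - 2^{-L}$: substituting $j = L - \ell + 1$ yields $\sum_{j=1}^L 2^{-j} = 1 - 2^{-L}$ (check $L=2$: $2^{-2} + 2^{-1} = 3/4 = 1 - 2^{-2}$). Since the paper chooses $L$ so that $2^{-L} = 4\varepsilon(k) = O(\sqrt{\log n / k})$, the main term is $1 - O(\sqrt{\log n/k})$, not $\tfrac12 - O(\sqrt{\log n/k})$, and your error-term estimate $\sum_\ell p_\ell \varepsilon(r_\ell) = O(\sqrt{\log n/k})$ is already correct. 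Once the geometric sum is fixed, the bound becomes $\E[w(\ALG)] \geq (1-2/n)(1 - 2^{-L} - O(\sqrt{\log n/k}))\,w(\OPT(N,\calM^k)) = (1 - O(\sqrt{\log n/k}))\,w(\OPT(N,\calM^k))$ with no further work. There is no telescoping, banding, or per-phase marginal benchmark in the paper's argument: the ball-throwing distribution $\mu$ is constructed so that the per-phase eligibility probabilities $2^{\ell - L - 1}$ partition $[0,1]$ up to the $2^{-L}$ sliver reserved for the pure-observation bin $0$, and the geometric sum you were suspicious of is precisely what encodes this.
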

\begin{proof}
Observe that $\ALG = \ALG_1 \cup \dots \cup \ALG_{\log(1/\varepsilon)} \in \calM^k$ because
\begin{align*}
    \sum_{\ell=1}^{L} (1+\varepsilon(r_\ell)) r_\ell \quad &= \quad \sum_{\ell=1}^{L} 2^{\ell-L-1} k + \sum_{\ell=1}^L C\sqrt{2^{\ell-L-1} k\log(n)} \\
    &= \quad (1 - 2^{-L}) k + C\sqrt{k\log(n)} \sum_{\ell=1}^L \sqrt{2^{\ell-L-1}} \\
    &= \quad k - 4C\sqrt{k\log(n)} + C\sqrt{k\log(n)} \frac{\sqrt{2^{-L}} - \sqrt{2}}{1 - \sqrt{2}} \quad < \quad k \enspace .
\end{align*}

Recall that $A_{\to\ell}$ is a $(1-\varepsilon(r_\ell))$-subsampling of the elements observed before phase $\ell$, and that $A_\ell$ is a $(1-\varepsilon(r_\ell))$-subsampling of the elements observed during phase $\ell$. For all $\ell \in [L]$, let $A^*_\ell \subseteq A_\ell$ be the set of elements that improve $A_{\to\ell}$ w.r.t.\ $\calM^{r_\ell}$. Observe that $A_{\to\ell}, A_\ell$ satisfy the conditions of~\Cref{prop:ExpectedUtility} with $p = 2^{\ell-L-1}(1-\varepsilon(r_\ell))$ and $r = r_\ell$. Thus,
\begin{align*}
    \E[w(\ALG)] \quad &= \quad \sum_{\ell=1}^L \E[w(\ALG_\ell)] \\
    &\geq \quad \sum_{\ell=1}^L \E[w(A^*_\ell) \cdot \I(\varphi(A^*_\ell) < (1 + \varepsilon(r_\ell)) r_\ell)] \\
    &\geq \quad (1-2/n) \cdot w(\OPT(N, \calM^k)) \sum_{\ell=1}^L 2^{\ell-L-1}(1-\varepsilon(r_\ell)) \\
    &= \quad (1-2/n) \cdot w(\OPT(N, \calM^k)) \bigg(1 - 2^{-L} - C\sqrt{\log(n)/k}\sum_{\ell=1}^L \sqrt{2^{\ell-L-1}}\bigg) \\
    &\geq \quad (1-2/n - 8\varepsilon(k)) \cdot w(\OPT(N, \calM^k)) \\
    &= \quad (1 - O(\sqrt{\log(n)/k})) \cdot w(\OPT(N, \calM^k)) \enspace . \qedhere
\end{align*}
\end{proof}

\begin{remark}
    Let $\sim$ be the equivalence relation which identifies \emph{parallel} elements ($i, j$ are parallel w.r.t.\ $\calM$ if $\rho(\{i\}) = \rho(\{j\}) = \rho(\{i, j\}) = 1$), and let $n_{\sim} = \abs{N / {\sim}}$ (the number of equivalence classes under the relation $\sim$). Since parallel elements behave identically with respect to spans, we can get the tighter bound $\abs{\calF_j} \leq n_\sim^j$. Repeating the proof with $\varepsilon(x) = \Theta(\sqrt{\log(n_\sim)/x})$ shows that~\Cref{alg:kFoldMatroidSecretary} is $(1 - O(\sqrt{\log(n_\sim)/k}))$-competitive. This is an improvement in certain cases; for example, graphic matroids satisfy $n_\sim \leq (\rho(N) + 1)^2$, so the dependence on $\log(n)$ can be replaced with a dependence on $\log(\rho(N))$.
\end{remark}

\section{Unions of \texorpdfstring{$k$}{k} Matroids}

The formal proof of~\Cref{thm:kMatroids} uses a similar approach to the proof of~\cite{BanihashemHKKMO25} for high girth graphic matroids, although our construction is much simpler because we are not constrained to the structure of graphic matroids.

\begin{definition}[Distributionally Competitive]
    Let $\mu$ be a distribution over a finite subset of $\mathbb{R}_+^m$ and $\calM \subseteq 2^{[m]}$ be a matroid. Let $\ALG$ be an algorithm, and for $\vb{v} \in \mathbb{R}_+^m$, let $\ALG(\vb{v}, \calM)$ be a random variable denoting the set of elements accepted by $\ALG$ on the secretary problem instance with element weights $\vb{v}$ and feasibility constraint $\calM$. Let $\OPT(\vb{v}, \calM)$ be the max-weight basis of $\calM$ w.r.t.\ the weights $\vb{v}$. We say $\ALG$ is \emph{$\alpha$-competitive for $\calM$ over $\mu$} if
    \[
        \E_{\vb{v} \sim \mu}[w(\ALG(\vb{v}, \calM))] \quad \geq \quad \alpha \cdot \E_{\vb{v} \sim \mu}[w(\OPT(\vb{v}, \calM))] \enspace .
    \]
\end{definition}

\begin{theorem}[\cite{BanihashemHKKMO25}] \label{thm:DistributionallyCompetitive}
    Let $\mu$ be a distribution over a finite subset of $\mathbb{R}_+^N$ and $\calM \subseteq 2^N$ be a matroid. If no algorithm is $\alpha$-competitive for $\calM$ over $\mu$, then no algorithm is $\alpha$-competitive for $\calM$.
\end{theorem}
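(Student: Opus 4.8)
I would prove the contrapositive: if some algorithm $\ALG$ is $\alpha$-competitive for $\calM$ in the usual worst-case sense, then some algorithm is $\alpha$-competitive for $\calM$ over $\mu$ — and the witness will simply be $\ALG$ itself. The whole argument is a routine averaging (Yao-style) observation: worst-case competitiveness is a guarantee that holds separately for each weight vector, whereas competitiveness over $\mu$ compares $\E_{\vb v\sim\mu}[w(\ALG(\vb v,\calM))]$ with $\E_{\vb v\sim\mu}[w(\OPT(\vb v,\calM))]$, two quantities that are each linear in the weight vector, so the latter inequality follows from the former by integrating against $\mu$.

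Concretely, the steps are: (i) unpack the hypothesis — $\ALG$ being $\alpha$-competitive for $\calM$ means that for every weight vector $\vb v$ with distinct coordinates, $\E[w(\ALG(\vb v,\calM))]\geq\alpha\cdot w(\OPT(\vb v,\calM))$, where the expectation is over the uniformly random arrival order and $\ALG$'s internal coins; (ii) observe that a $\mu$-sampled instance still presents its elements in uniformly random order and that $\ALG$ is oblivious to $\vb v$, so conditioning on $\vb v\sim\mu$ leaves the inner experiment unchanged, giving $\E[w(\ALG(\vb v,\calM))\mid\vb v]\geq\alpha\cdot w(\OPT(\vb v,\calM))$ for every $\vb v$ in the (finite) support of $\mu$; (iii) take $\E_{\vb v\sim\mu}[\,\cdot\,]$ of both sides and use linearity to obtain $\E_{\vb v\sim\mu}[w(\ALG(\vb v,\calM))]\geq\alpha\cdot\E_{\vb v\sim\mu}[w(\OPT(\vb v,\calM))]$, which is exactly $\alpha$-competitiveness of $\ALG$ for $\calM$ over $\mu$. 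Contraposing yields the theorem.

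There is no genuine obstacle here; the one point requiring care is that worst-case competitiveness is defined only over instances with \emph{distinct} weights, while $\mu$ may place mass on vectors with ties, so step (ii) does not apply verbatim at such $\vb v$. When $\mu$ is supported on distinct-weight vectors — which one can arrange for the hard instances built in the proof of \Cref{thm:kMatroids} — there is nothing to do. In general I would reduce to that case by a perturbation argument: run $\ALG$ after subtracting $\delta\cdot(1,2,\dots,n)$ from the observed weights, which is the same as running $\ALG$ on the distinct-weight distribution obtained by applying this shift to each support point of $\mu$; steps (i)–(iii) apply there, and since $w(\OPT(\cdot,\calM))$ and the value of any fixed feasible set are both $1$-Lipschitz in the weights with respect to $\ell_1$ while the shift has $\ell_1$-norm $O(\delta n^2)$, letting $\delta\to 0$ transfers the bound back to $\mu$. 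This bookkeeping around ties is the only part of the proof that needs more than one line, and it is irrelevant to the application in \Cref{thm:kMatroids}.
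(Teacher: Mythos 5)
Your proof is correct, and it is the standard ``easy direction'' of Yao's minimax principle: a pointwise competitive guarantee averages against any fixed distribution $\mu$. Note that this paper does not actually supply a proof of the theorem --- it is imported verbatim from \cite{BanihashemHKKMO25} --- so there is no in-paper argument to compare against, but your steps (i)--(iii) (unpack pointwise competitiveness, observe that conditioning on the realized $\vb v$ leaves the inner random-order experiment untouched, integrate against $\mu$) are exactly the canonical ones, and contraposing gives the stated implication. Your caveat about ties is a reasonable point of care, since the paper's definition of the secretary problem insists on distinct weights while the definition of distributional competitiveness imposes no such restriction on $\mu$; your perturbation fix works with two small adjustments: add $\delta\cdot(1,\dots,n)$ rather than subtract so as to keep weights in $\mathbb{R}_+$, and take $\delta$ generic so the shift does not itself create ties among the finitely many support points of $\mu$ (the algorithm can apply the shift $v_i\mapsto v_i+\delta i$ consistently because it observes the identity $i$ of each arriving element, not merely its weight). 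As you say, none of this matters for the application in the proof of \Cref{thm:kMatroids}, where the hard distribution can be taken to have distinct-weight support.
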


\begin{theorem}[\cite{BanihashemHKKMO25}] \label{thm:HardDistribution}
    For any $\varepsilon > 0$, there exists $m \in \mathbb{N}$ and a distribution $\mu$ over a finite subset of $\mathbb{R}_+^m$ such that for element weights $(v_1, \dots, v_m) \sim \mu$, no algorithm is $(1/e+\varepsilon)$-competitive for $\calU_m$ over $\mu$, where $\calU_m$ is the $1$-uniform matroid on $m$ elements.
\end{theorem}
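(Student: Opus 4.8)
The plan is to present \Cref{thm:HardDistribution} as the distributional incarnation of the classical fact (\cite{Dynkin63}) that no algorithm for the secretary problem stops at the maximum-weight element with probability exceeding $1/e + o(1)$. Concretely, I would construct a distribution $\mu$ over weight vectors so that (i) the weights are \emph{exponentially separated}, so that any selected element of expected weight close to $\E[w(\OPT)]$ must be (essentially) the single maximum-weight element; and (ii) the weights are \emph{randomized} so that, when an element arrives, its weight reveals that element's rank among the elements seen so far but conveys no further information about whether it is the global maximum. Under (i) and (ii) an algorithm is effectively reduced to the relative-rank secretary problem, for which the $1/e$ bound is classical, and choosing $m$ large makes the $o(1)$ losses fall below $\varepsilon$.

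For the construction I would fix a huge base $M = M(\varepsilon)$ and let the weight vector be obtained by choosing a random ``height'' profile and setting element $i$'s weight to a generic perturbation of $M^{h_i}$. For example: draw $K$ from a suitable distribution on $[m]$, pick a uniformly random $K$-subset $A \subseteq [m]$, assign the values $M^1, \dots, M^K$ to the elements of $A$ in uniformly random order, and give the remaining elements a negligible weight; or, in a ``levelled'' variant, give each element an independent level $h_i$ with geometrically decaying probabilities, capped at a random height. In all such constructions $w(\OPT)$ equals $M$ raised to the top occupied level, while stopping at any lower element yields at most $w(\OPT)/M$; and because the top occupied level is genuinely random and not identifiable from a single observed weight, an algorithm that has observed a prefix of arrivals knows only the relative order of the corresponding heights.

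It then remains to (a) verify that $\E_{\vb v \sim \mu}[w(\OPT)]$ is finite and is not dominated by degenerate tail events; (b) observe that, by exponential separation, any algorithm $\ALG$ satisfies $\E[w(\ALG)] \le \E[w(\OPT)\cdot\I(\ALG \text{ stops at the max-weight element})] + \E[w(\OPT)]/M$; and (c) show the first term is at most $(1/e + o_m(1))\,\E[w(\OPT)]$, by arguing that the height-randomization in (ii) makes $\ALG$'s stopping decisions depend only on relative-rank information and invoking the classical secretary lower bound. Taking $m$ and $M$ large then forces $\E[w(\ALG)] < (1/e + \varepsilon)\,\E[w(\OPT)]$ for every $\ALG$. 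The main obstacle I expect is step (c) together with the design of $\mu$: the randomization needed to hide the maximum unavoidably spreads out $w(\OPT)$, and naive choices either make $\E[w(\OPT)]$ infinite or make the hidden scale detectable precisely on the high-$w(\OPT)$ instances that dominate the expectation, so the construction must be tuned so that essentially every instance in the support is as hard as the relative-rank secretary problem — and the reduction to that problem must be carried out for arbitrary adaptive randomized algorithms, not merely threshold rules.
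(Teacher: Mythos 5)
The paper does not prove Theorem~\ref{thm:HardDistribution}: it is cited verbatim from \cite{BanihashemHKKMO25} and invoked as a black box in the proof of Theorem~\ref{thm:kMatroids}, so there is no internal proof to compare your attempt against.

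Evaluated on its own, your outline captures the right high-level strategy: exponentially separated weight levels so that a competitive algorithm must stop at the unique maximum, together with a randomized hidden scale so that observed magnitudes convey essentially only relative-rank information, reducing to the classical $1/e$ lower bound for the relative-rank secretary problem. You also explicitly flag the crux---that naive randomizations either make $\E[w(\OPT)]$ infinite or concentrate it on instances where the hidden scale leaks---but then treat this as a tuning detail rather than as the entire substance of the theorem. In fact your two concrete suggestions for $\mu$ both run into exactly this tension. In the random-$K$ variant, any distribution on $K$ under which the top scale does not dominate $\E[w(\OPT)]$ must have $\Pr[K=k]\cdot M^k$ roughly flat, i.e., $\Pr[K=k]\propto M^{-k}$; but then $\Pr[K=j\mid K\geq j]\approx 1-1/M$, so upon observing level $j$ the algorithm can conclude with near certainty that $j$ is the global maximum, and the hardness collapses. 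In the independent-geometric-levels variant, $\E\big[M^{\max_i G_i}\big]$ diverges unless the geometric is truncated, after which the cap scale dominates and the same detectability issue reappears at the cap. So the sketch correctly points at where the difficulty lies, but the genuinely hard part---designing $\mu$ so that \emph{every} adaptive randomized algorithm is forced into relative-rank behavior even on the instances that dominate $\E[w(\OPT)]$, and making this rigorous via step~(c)---is not supplied, and as written neither proposed construction yields a hard distribution.
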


\kMatroids*
\begin{proof}
Without loss of generality, suppose $2k/\varepsilon$ is an integer. Let $m$ and $\mu$ be those promised by~\Cref{thm:HardDistribution} for $\varepsilon/2$, and let $n = 2km/\varepsilon$. Let $\calM \subseteq 2^N$ be the disjoint union of $2k/\varepsilon$ copies of $\calU_m$, i.e., $\calM = \{S \subseteq N : \forall i \in [2k/\varepsilon], \abs{S \cap [(i-1)m + 1, im]} \leq 1\}$.

Let $\nu$ be the distribution over a finite subset of $\mathbb{R}_+^n$ obtained by concatenating $2k/\varepsilon$ independent draws from $\mu$. Suppose for contradiction that there exists an algorithm $\ALG$ which is $(1/e+\varepsilon)$-competitive for $\calM \lor \calU^k$ over $\nu$. Then consider the following algorithm $\ALG'$ for the $m$-element secretary problem for $\calU_m$ over $\mu$:
\begin{itemize}[topsep=4pt]
    \item Sample $\vb{v} \sim \nu$ and uniformly random $j \in [2k/\varepsilon]$. Replace the weights of elements $[(j-1)m+1, jm]$ with the input to $\ALG'$.

    \item Run $\ALG$ on $\vb{v}$, where we can simulate the arrivals of elements in $[(j-1)m+1, jm]$ by observing the next arrival of the input to $\ALG'$.

    \item The first time $\ALG$ accepts an element in $[(j-1)m+1, jm]$, accept the corresponding element observed by $\ALG'$.
\end{itemize}

Observe that
\begin{align*}
    \E_{\vb{v} \sim \nu}[w(\ALG(\vb{v}, \calM \lor \calU^k))] \quad &\geq \quad (1/e + \varepsilon) \E_{\vb{v} \sim \nu}[w(\OPT(\vb{v}, \calM \lor \calU^k))] \\
    &\geq \quad (1/e + \varepsilon)(2k/\varepsilon) \E_{\vb{u} \sim \mu}[w(\OPT(\vb{u}, \calU_m))] \enspace .
\end{align*}

Let $\ALG_\calM(\vb{v}, \calM \lor \calU^k)$ be the set of first elements accepted from each $[(i-1)m+1, im]$. Since this set excludes at most $k$ elements from $\ALG(\vb{v}, \calM \lor \calU^k)$,
\begin{align*}
    \E_{\vb{v} \sim \nu}[w(\ALG_\calM(\vb{v}, \calM \lor \calU^k))] \quad &\geq \quad \E_{\vb{v} \sim \nu}[w(\ALG(\vb{v}, \calM \lor \calU^k))] - k \cdot \E_{\vb{u} \sim \mu}[w(\OPT(\vb{u}, \calU_m))] \\
    &\geq \quad (1/e + \varepsilon/2)(2k/\varepsilon) \E_{\vb{u} \sim \mu}[w(\OPT(\vb{u}, \calU_m))] \enspace .
\end{align*}

Finally, because $j \in [2k/\varepsilon]$ is uniformly random,
\begin{align*}
    \E_{\vb{u} \sim \mu}[w(\ALG'(\vb{u}, \calU_m))] \quad &= \quad \varepsilon/(2k) \E_{\vb{v} \sim \nu}[w(\ALG_\calM(\vb{v}, \calM \lor \calU^k))] \\
    &\geq \quad (1/e + \varepsilon/2) \E_{\vb{u} \sim \mu}[w(\OPT(\vb{u}, \calU_m))] \enspace ,
\end{align*}
which is a contradiction by~\Cref{thm:HardDistribution}. Thus, no algorithm is $(1/e+\varepsilon)$-competitive for $\calM \lor \calU^k$ over $\nu$. \Cref{thm:DistributionallyCompetitive} completes the proof.
\end{proof}

\bibliographystyle{alpha}
\bibliography{MasterBib}

\end{document}